%% file: paper_arxiv_v1_nc.tex
\documentclass[abstract=true, DIV=14]{scrartcl}

\usepackage[noEnd,commentColor=black]{algpseudocodex}
\usepackage{amsmath, mleftright}
\usepackage{amssymb}
\usepackage{amsthm}
\usepackage[mathscr]{euscript}
\usepackage[shortlabels]{enumitem}
\usepackage{graphicx} 
\usepackage{subcaption}
\usepackage{thmtools}
\usepackage{todonotes}
\usepackage{xcolor}
\usepackage{microtype}
\usepackage[T1]{fontenc}
\usepackage{nicefrac,xfrac}

\makeatletter\input{t1cmss.fd}\makeatother
\DeclareFontShape{T1}{cmss}{bx}{n}
  {<5><6><7><8><9><10->ecsx1000}{}
  
\makeatletter
\def\@fnsymbol#1{\ensuremath{\ifcase#1\or \!\;\or \!\;\or \ddagger\or
   \mathsection\or \mathparagraph\or \|\or **\or \dagger\dagger
   \or \ddagger\ddagger \else\@ctrerr\fi}}

\makeatother

\usepackage[normalem]{ulem}

\usepackage[english]{babel}

\usepackage{tcolorbox}

\usepackage{upgreek}
\usepackage[colorlinks]{hyperref}
\usepackage[capitalize]{cleveref}

\crefname{equation}{eq.}{eqs.} 
\crefname{enumi}{}{} 

\crefname{icase}{case}{cases}
\crefname{ipart}{part}{parts}
\crefname{iprop}{property}{properties}
\crefname{iinv}{invariant}{invariants}

\crefformat{section}{\S\,#2#1#3}
\crefformat{subsection}{\S\,#2#1#3}
\crefrangeformat{section}{\S\S\,#3#1#4--#5#2#6}
\crefmultiformat{section}{\S\S\,#2#1#3}{,~#2#1#3}{,~#2#1#3}{,~#2#1#3}

\usepackage{authblk}
\usepackage{float}

\usepackage[normalem]{ulem}

\usepackage{tikz}
\usetikzlibrary{calc}

\newcommand{\R}{\mathbb{R}}
\newcommand{\fO}{\mathcal{O}}

\newcommand{\F}{\mathcal{F}}
\newcommand{\cF}{\mathcal{F}}

\newcommand{\Unif}{\operatorname{Unif}}
\newcommand{\HH}{\mathsf H}
\newcommand{\Hc}{\mathsf H_{\mathrm{cont}}}
\newcommand{\II}{\mathsf I}
\newcommand{\1}{\mathbf 1}
\newcommand{\Dent}{D^{\mathrm{ent}}}
\newcommand{\hbin}{\mathsf H_2}

\newcommand{\Ebb}{\mathbb E}
\newcommand{\Pbb}{\mathbb P}
\newcommand{\Qbb}{\mathbb Q}

\setkomafont{captionlabel}{\bfseries}
\renewcaptionname{english}{\figurename}{Fig.}
\setcapindent{0pt}
\renewcommand{\alpha}{\upalpha}

\newcommand{\cT}{\mathcal{T}}

\newcommand{\cS}{\mathcal{S}}

\newcommand\utimes{\mathbin{\ooalign{$\cup$\cr
   \hfil\raise0.42ex\hbox{$\scriptscriptstyle\times$}\hfil\cr}}}
\newcommand\bigutimes{\mathop{\ooalign{$\bigcup$\cr
   \hfil\raise0.36ex\hbox{$\scriptscriptstyle\boldsymbol{\times}$}\hfil\cr}}}

\newcommand{\connected}[1]{\def\temp{#1}\ifx\temp\empty\sim\else\overset{#1}{\sim}\fi}

\tikzset{
	point/.style={circle, fill, inner sep=1.5pt},
	smallpoint/.style={point, inner sep=1.2pt},
	tinypoint/.style={point, inner sep=1pt},
	hlbox/.style={fill, {white!90!black}},
	subrect/.style={draw, fill={white!80!cyan}}, 
	msrect/.style=subrect 
}

\newtheorem{theorem}{Theorem}[section]

\newtheorem*{theorem*}{Theorem} 
\newtheorem{lemma}[theorem]{Lemma}
\newtheorem{proposition}[theorem]{Proposition}

\theoremstyle{definition}

\title{Optimal chain density, entropy, and space-time tradeoffs for the TSP}

\author[1]{Alexandr Andoni\thanks{$^1$Email: \href{mailto:andoni@cs.columbia.edu}{andoni@cs.columbia.edu.}, \href{mailto:hantao.yu@columbia.edu}{hantao.yu@columbia.edu}}}
\author[2]{Justin Dallant\thanks{$^2$Email: \href{mailto:justin.dallant@tu-dresden.de}{justin.dallant@tu-dresden.de}, \href{mailto:laszlo.kozma@tu-dresden.de}{laszlo.kozma@tu-dresden.de}}}
\author[2]{L\'{a}szl\'{o} Kozma\protect\footnotemark[1]}
\affil[1]{Columbia University}
\affil[2]{Dresden University of Technology}
\author[1]{Hantao Yu\protect\footnotemark[1]}

\date{}

\begin{document}

\maketitle

\begin{abstract}

We nearly settle a natural extremal question  
about set systems over $[n]$: the tradeoff between the \emph{size} (number of sets) and the number of \emph{full chains}. 
This question was initially raised by Johnson, Leader, and Russell [Combin.~Probab.~Comp., 2015] as a counterpart to Sperner-type results in combinatorics.

Recently, a framework introduced by Ameli, Nederlof, and Wang, and independently by Dallant and Kozma [FOCS 2026] linked this question to the space- and time-complexity of Bellman-Held-Karp-style dynamic programming algorithms for permutation problems such as the traveling salesman (TSP). Precisely, they showed that a space-time product $\gamma^{n+o(n)}$ is feasible for the TSP, whenever a set system of (normalized) size $S$ and chain density $D$ exists, with $ \gamma = S^2/D$. In this paper we show an essentially \emph{optimal} bound of $\gamma \approx 3.1819$ for this quantity, closing the gap between the previous best lower and upper bounds of $\gamma \geq 3.015$ and $ \gamma \leq 3.572$ respectively. This implies a TSP algorithm with space-time product $\fO(3.1819^n)$ for input size $n$, as well as a limit to further improvements in this broad framework. More generally, we can obtain close to optimal values $D$ for any feasible value $S$, effectively settling the question of the number of full chains at every size. 

The crucial step towards our results is casting the extremal combinatorics question as an  \emph{information~vs.~entropy} tradeoff involving two random variables. This reformulation \emph{exactly} captures the optimal tradeoff for the combinatorial problem, leading to a framework in which primal-dual certificates can be derived, proving rigorous upper and lower bounds on $\gamma$. We also give a further application of our techniques, improving a bound of Duffus, Sands, and Winkler on the minimum size of fibres in the Boolean lattice.

\end{abstract}

\section{Introduction}
\label{sec1}

The traveling salesman problem (TSP) is one of the cornerstones of combinatorial optimization and has served as a proving ground for many of the algorithmic techniques invented in the past decades, whether exact, approximate, or heuristic, e.g., see~\cite{schrijver2005history, ApplegateEtAl2006, Cook2011, traub2024approximation}.  

Yet, despite significant effort, the fastest algorithm for solving TSP exactly, in the worst case, remains the Bellman-Held-Karp~\cite{HeldKarp1962,Bellman1962} dynamic program with running time and space $\fO^*(2^n)$. When the space budget is bounded by a polynomial, the fastest known algorithm is the Gurevich-Shelah~\cite{GurevichShelah1987} divide-and-conquer with running time $\fO^*(4^n)$.\footnote{TSP asks for a permutation of $n$ cities, minimizing the sum of pairwise distances between consecutive cities in the permutation. The mentioned results hold for arbitrary distance functions, but no better bounds are known even under a metric assumption. The $\fO^*(\cdot)$ notation is used to suppress polynomial factors.}  
A folklore combination of the two methods results in a running time $\fO^*(\cT^n)$ and space $\fO^*(\cS^n)$ with $\cT \cdot \cS = 4$, for various values of $1 \leq \cS \leq 2$ (e.g., see~\cite[\S\,10]{FominK10}). 

While the two extremes ($\cT = 2$ and $\cT=4$) of this tradeoff have resisted improvement for several decades now, in a surprising breakthrough in 2010, Koivisto and Parviainen~\cite{KoivistoParviainen2010} improved the tradeoff at intermediate points, in particular, to $\cT \cdot \cS \approx 3.93$, for $\cT \approx 2.7$. Their result builds on the classical dynamic programming (DP) approach of Bellman, Held, and Karp, but partitions the space of possible solutions (all permutations of the input) into sets of permutations that are linear extensions of certain partial orders. This allows searching each subset separately, reusing space between successive searches.   

Informally, the \emph{number of linear extensions} determines what fraction of the solution space is covered by one search, and thus the number of repetitions that are needed. The \emph{number of downsets} (poset ideals) exactly corresponds to the size of the DP table, dominating both the space- and time requirements of a single search. 
The efficiency of the scheme thus hinges on the existence of partial orders with few downsets and many linear extensions (relatively speaking).

These two goals are, of course, in tension; to support many linear extensions, a partial order must admit all of their prefix sets as downsets; the question is, intuitively, how much ``sharing'' is possible. Unfortunately, a tight quantitative relation between these two basic parameters of a poset is not known. Koivisto and Parviainen showed that a certain two-level poset class is optimal among a broader class of layered posets, and within this class, the most favorable tradeoff is attained at a concrete constant size of $n=26$. This construction yielded the above TSP-tradeoff of $\cT \cdot \cS \approx 3.93$.

\smallskip

Recently, Ameli, Nederlof, and Wang~\cite{ANW} and independently, Dallant and Kozma~\cite{DK} described a more general framework for space-time tradeoffs for TSP (as well as for a broader family of permutation problems), where the solution space is partitioned into sets of permutations that arise as \emph{maximal chains} of set families. The efficiency of the algorithm is then dependent on the number of maximal chains, and on the size of the family (the number of sets it contains).

This scheme generalizes the one by Koivisto and Parviainen; downsets of a poset $P$ form a set family whose maximal chains are exactly the linear extensions of $P$. It is, in fact, significantly more general, as there are $2^{2^n}$ set families, but only $2^{\fO(n^2)}$ partial orders over a ground set of size $n$~\cite{kleitman1975asymptotic}. The framework reduces the algorithm design question to one of extremal combinatorics:
\vspace{-0.1in}
{\begin{center}{\textbf{finding a set family of small size with many maximal chains}.} \end{center}}
\vspace{-0.05in}

More precisely,~\cite{ANW, DK} obtain the space-time tradeoff $\cT \cdot \cS = \gamma = S^2/D$, whenever there exists a set family over $[n]$ of size $S^n$ and number of maximal chains $D^n \cdot n!$. The quantities $S$ and $D$ can be seen as small constants, and it is sufficient to show the existence of a single finite set system with these parameters. (We give more precise definitions in \S\,\ref{sec11}.)   

The best construction given in \cite{DK} yields a value $\gamma \leq 3.572$ for a TSP space-time tradeoff $\cT \cdot \cS$ with the same value. In~\cite{ANW} the lower bound $\gamma \geq 3.015$ is shown via a delicate argument involving isoperimetric inequalities; a lower bound of $\gamma \geq 3$ more easily follows from a counting argument~\cite{DK, ANW}. 

The question of relating these two parameters of set families has also been raised before in extremal combinatorics. Johnson, Leader, and Russell~\cite{JLRsetSystems} asked which family, of a given size, maximizes the number of maximal chains (and hence their density). They conjectured that extremal families arise from two-level posets, closely related to the constructions found earlier by Koivisto and Parviainen~\cite{KoivistoParviainen2010}. 
The results of \cite{ANW, DK} refute this conjecture, yielding families with a more favorable tradeoff. The optimum value of $\gamma$ and the exact quantitative nature of the tradeoff between size and chain-density, however, was left open by these works. 

\paragraph{Our contribution.} In this paper we essentially close the gap, obtaining $3.1818 < \gamma < 3.1819$, where further refinement appears limited only by computation.  
To obtain these results, we reformulate the question as the tradeoff between two information-theoretic quantities involving a pair of random variables. We show that the optimal tradeoffs of the two settings \emph{coincide}: both upper- or lower bound certificates for one problem imply the same for the other. The advantage of the information-theoretic view, besides the simpler formulation, is that it is amenable to the efficient construction of such certificates. 
Upper bound certificates in the information-theoretic setting are, in turn, transferable back to the original setting, yielding set systems with the required parameters.

As a main algorithmic consequence of our results, 
we obtain an algorithm for the TSP with running time $\fO^*(\cT^n)$ and space $\fO^*(\cS^n)$ where $\cT \cdot \cS \leq 3.1819$, and we show that a value $\cT \cdot \cS \leq 3.1818$ is not attainable in this framework. 

Besides this point-wise result, we can, in principle, compute any point on the tradeoff curve between $S$ and $D$ with arbitrary accuracy. For easier interpretability, we also derive closed-form, approximate upper and lower bound curves on the optimal tradeoff.
This results, in the framework of~\cite{ANW,DK}, in TSP algorithms with an improved running time for any space budget $\fO^*(\cS^n)$, with $1 < \cS < 2$.

The number of maximal chains is a natural invariant of a set system, and hence its characterization likely has further applications. As an example, we show that our results imply a lower bound on the size of the smallest \emph{fibre} in the Boolean lattice (a fibre is a subfamily that intersects every maximal antichain). This result improves decades-old bounds by Duffus, Sands, and Winkler~\cite{DuffusSW90}, as well as by  {\L}uczak~\cite{DuffusSands2001}; see~\S\,\ref{sec14} and \S\,\ref{sec8}.

\subsection{The combinatorial problem}\label{sec11}

Recall that the Boolean lattice $2^{[n]}$ consists of all subsets of
$[n]=\{1,2,\ldots,n\}$, ordered by inclusion. A subset of $2^{[n]}$ is called a \emph{set system} or a (\emph{set}) \emph{family}; these terms are used interchangeably. A \emph{maximal chain} (also: \emph{full chain}) is
a saturated inclusion chain from $\varnothing$ to $[n]$, i.e., a sequence
\[
  \varnothing=A_0\subset A_1\subset\cdots\subset A_n=[n],\qquad \mathrm{~where~} |A_i|=i.
\]
There are exactly $n!$ such chains: each one is encoded by the
permutation that lists the elements in the order they are added. For a
family $\F\subseteq 2^{[n]}$, write $C(\F)$ for the number of
maximal chains \emph{entirely contained} in $\F$ (so each $A_i\in\F$),
and define the two normalized quantities
\[
  S(\F)=|\F|^{1/n},
  \qquad
  D(\F)=\left(\frac{C(\F)}{n!}\right)^{1/n}.
\]

Here $S(\F)\in[1,2]$ measures the exponential \emph{size} of $\F$
(per coordinate), while $D(\F)\in [0,1]$ is an exponential
measure of the chain-density of $\F$: large $D$ (close to one) means
$\F$ contains a large fraction of all $n!$ maximal chains. 
The basic
extremal quantity we study is
\[
  D_S
  \;=\;
  \sup\bigl\{\,D(\F):\F\subseteq 2^{[n]}\text{ for some }n,\;
  S(\F)\le S\,\bigr\},
  \qquad \mathrm{~where~} 1<S\le 2,
\]
or informally: ~\emph{How chain-dense can a family be if its size is at most
$S^n$?}

\smallskip

Of particular interest is the infimum of $S^2/D_S$, which controls the space-time product achievable for permutation-type problems (such as the TSP) in the framework of~\cite{ANW, DK}.

\smallskip

For convenience, we define the base-$2$ logarithms of the parameters:

$$S^{\ell}(\F) = {\log_2{S(\F)}}, \qquad D^{\ell}(\F) = {\log_2{(D(\F))}}, $$

and the extremal quantity

\[
  D^{\ell}_{S^{\ell}}
  \;=\;
  \sup\bigl\{\,D^{\ell}(\F):\F\subseteq 2^{[n]}\text{ for some }n,\;
  S^{\ell}(\F)\le S^{\ell}\,\bigr\},
  \qquad 0<S^{\ell}\le 1,
\]

noting that, for $\sigma\in(0,1]$:

$$D^\ell_\sigma = \log_2{(D_{2^\sigma})}.$$

\subsection{The entropy formulation}

Call a pair of random variables $(Z,T)$ \emph{admissible} if $T\sim\Unif[0,1]$ and $Z$ is an
arbitrary auxiliary random variable (not necessarily independent from $T$).  Denoting by $\HH$ the \emph{Shannon entropy} and by $\HH(\,\cdot\,|\,\cdot\,)$ the conditional entropy, define
\begin{equation*}
  h(Z,T)=\sup_{0\le t\le1}\HH(\1_{\{T\le t\}}\mid Z).
\end{equation*}

Informally, $h(Z,T)$ is the information content of $T$ being below a threshold $t$, given the realization of $Z$, for the threshold $t$ that maximizes this value. 

For a parameter $0<\sigma\le1$, we denote
\begin{equation*}
  \Dent_\sigma=-
  \inf\{\II(Z;T) \mid (Z,T)\text{ admissible and }h(Z,T)\le\sigma\}.
\end{equation*}

Here, $\II(Z;T)$ denotes the mutual information between $Z$ and $T$. Intuitively, there is a natural tension between $\sigma$ and $\Dent_\sigma$. If $h(Z,T)$ is suppressed to a low value $\sigma$, that is, if knowledge of $Z$ confers significant information about the outcome of $T$, then $Z$ and $T$ are mutually highly dependent; as a consequence, we expect $\Dent_\sigma$ to have a low value.  

The central result of our paper is that the tradeoff between $\sigma$ and $\Dent_\sigma$ is \emph{identical} to the tradeoff between the two combinatorial quantities described earlier. Specifically, we show the following.

\begin{theorem}\label{thm:comb_equals_entropy}
For every $0<\sigma\le1$,
\[
  D^{\ell}_{\sigma}=\Dent_\sigma.
\]
In particular, writing $\sigma = S^{\ell} = \log_2{S}$,
\begin{align*}
    \inf_{1<S\leq 2}\{\log_2 (S^2/D_S)\} &= \inf_{0<\sigma\leq 1}\{2\sigma - \Dent_\sigma\}\\
                     &= \inf_{\substack{0<\sigma\leq 1,\\(Z,T)}}\{2\sigma + \II(Z;T) \mid (Z,T)\text{ admissible and }h(Z,T)\le\sigma\}\\
                     &= \inf_{(Z,T)}\{2h(Z,T) + \II(Z;T) \mid (Z,T)\text{ admissible}\}.
\end{align*}
\end{theorem}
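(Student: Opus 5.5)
The plan is to prove the two inequalities $D^{\ell}_\sigma\le \Dent_\sigma$ and $D^{\ell}_\sigma\ge \Dent_\sigma$ separately. After that, the chain of equalities for $\inf\log_2(S^2/D_S)$ is bookkeeping: substitute $\sigma=\log_2 S$, and note that for a fixed admissible pair the best choice is $\sigma=h(Z,T)$.

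\emph{Combinatorial $\Rightarrow$ entropy.} Let $\F\subseteq 2^{[n]}$ with $|\F|\le 2^{n\sigma}$ and $C(\F)$ chains.
\begin{enumerate}[(i)]
\item First apply a uniformly random relabeling of $[n]$. This changes neither $|\F|$ nor $C(\F)$, and it makes the construction below exchangeable in the coordinates.
\item Pick a uniformly random maximal chain of $\F$, encoded by a permutation $\pi$. Independently draw $n$ i.i.d.\ $\Unif[0,1]$ variables with order statistics $U_{(1)}<\dots<U_{(n)}$, and set $T_{\pi(k)}=U_{(k)}$.
\item The vector $(T_1,\dots,T_n)$ has a density. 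Its differential entropy is
\[
\log_2 C(\F)+\log_2(1/n!)=nD^{\ell}(\F),
\]
since $\pi$ and the order statistics are independent and the sorted uniforms have entropy $\log_2(1/n!)$.
\item By exchangeability, each marginal $T_j$ is a uniform mixture of the order statistics, hence $\Unif[0,1]$.
\item Let $J$ be uniform on $[n]$, $T=T_J$ and $Z=(J,T_1,\dots,T_{J-1})$. Then the pair is admissible, and by the chain rule
\[
\II(Z;T)=0-\tfrac1n\sum_j h(T_j\mid T_{<j})=-D^{\ell}(\F).
\]
\item For the entropy constraint, fix a threshold $t$. Replacing the conditioning on $T_{<j}$ by the coarser conditioning on the indicators $\1_{\{T_i\le t\}}$, $i<j$, only increases conditional entropy. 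The chain rule then gives
\[
\HH(\1_{\{T\le t\}}\mid Z)\le \tfrac1n\,\HH(A(t)),\qquad A(t)=\{j:T_j\le t\}.
\]
\item The set $A(t)$ is a prefix set of the chain, so $A(t)\in\F$ and $\HH(A(t))\le\log_2|\F|\le n\sigma$. Hence $h(Z,T)\le\sigma$, which proves $D^{\ell}_\sigma\le\Dent_\sigma$.
\end{enumerate}

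\emph{Entropy $\Rightarrow$ combinatorial.} Start from an admissible pair $(Z,T)$ with $h(Z,T)\le\sigma$ and $\II(Z;T)$ close to the infimum.
\begin{enumerate}[(i)]
\item Reduce to $Z$ over a finite alphabet, for instance by quantizing $Z$ together with a fine grid of thresholds. The goal is to lose only $\varepsilon$ in $\II$ and in $h$, and data processing should control this loss.
\item Fix a strongly typical sequence $z^n$ and a grid $t_1<\dots<t_m$ with mesh $\delta$.
\item Let $\F$ consist of all sets $A$ such that, for some grid point $t_k$ and up to rounding to the two neighbouring grid levels, $\1_A$ is conditionally typical with respect to the law of $\1_{\{T\le t_k\}}$ given $Z=z_i$.
\item The size bound is then $|\F|\le m\cdot 2^{n(\HH(\1_{\{T\le t_k\}}\mid Z)+\varepsilon)}\le 2^{n(\sigma+\varepsilon)}$.
\item For the chain count, draw $T^n$ from the conditional law given $z^n$. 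With high probability, every prefix set $\{i:T_i\le t\}$ with $t$ on the grid is typical. The conditional density of $T^n$ is about $2^{-nh(T\mid Z)}=2^{n\II(Z;T)}$ on a typical set of probability close to $1$.
\item Each permutation corresponds to a region of volume $1/n!$ in $[0,1]^n$. So the number of permutations whose region meets the good set is at least about $n!\,2^{-n\II(Z;T)-o(n)}$.
\item Every such ordering should give a chain inside $\F$, once $\F$ is enlarged to include intermediate sets between consecutive grid levels. Letting $\varepsilon,\delta\to0$ and $n\to\infty$ then yields $D^{\ell}_{\sigma+\varepsilon'}\ge\Dent_\sigma-\varepsilon'$.
\item Finally, use monotonicity and concavity of both sides in $\sigma$, which follow from taking disjoint products and mixtures, to remove the $\varepsilon'$ in $\sigma$.
\end{enumerate}

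\emph{Main obstacle.} I expect the hardest part to be step (vii) of the second direction: making typicality hold uniformly over all thresholds, not just finitely many. Between grid points a chain passes through about $n\delta$ intermediate sets. Including all of them in $\F$ must cost only $2^{o(n)}$ extra sets as $\delta\to0$. My first attempt would bound this cost by $\binom{n}{\le n\delta}^{2}$ per grid set, which is $2^{n\cdot O(\hbin(\delta))}$, and then choose $\delta$ small relative to $\varepsilon$. The finite-alphabet reduction of $Z$ in step (i) is a secondary technical point.
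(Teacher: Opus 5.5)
Your first direction is, after one correction, the paper's Lemma~\ref{lem:set-to-entropy-witness}. Choosing a uniform chain of $\F$ with independent order statistics is exactly the law of i.i.d.\ uniforms conditioned on their order being a chain of $\F$. The correction is that step~(i) must be deleted.

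\textbf{Why the relabeling breaks the argument.} If the relabeling $\rho$ is uniformly random and not part of $Z$, then $(T_1,\dots,T_n)$ has a mixture density over $\rho$. Its differential entropy exceeds $\log_2(C(\F)/n!)$ by $\II(\rho;T_1,\dots,T_n)$, which is positive whenever $0<C(\F)<n!$. So (iii) is false. Moreover, $A(t)$ now ranges over $\bigcup_\rho\rho(\F)$, so the bound $\HH(A(t))\le\log_2|\F|$ in (vii) also fails. Concretely, for $\F$ a single chain your construction makes $T_1,\dots,T_n$ i.i.d.\ uniform. That gives $\II(Z;T)=0$ and $h(Z,T)=1$, instead of $\frac1n\log_2 n!$ and $\frac1n\log_2(n+1)$.

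\textbf{The fix.} You introduced the relabeling only to make every marginal $T_j$ uniform, but you only need $T=T_J$ to be uniform. It is: $J$ is independent of $\pi$ and of the order statistics, so the rank $\pi^{-1}(J)$ is uniform on $[n]$ and independent of $U_{(1)},\dots,U_{(n)}$. A uniform mixture of the $n$ order statistics is $\Unif[0,1]$. Alternatively, keep $\rho$ but include it in $Z$.

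\textbf{Second direction: a valid but different route.} Your second direction is sound in outline and genuinely different from Proposition~\ref{prop:block-profile-approx}.
\begin{itemize}
\item The paper never samples. It fixes classes $B_i$ of size $\approx p_in$ and position blocks $R_a$ of size $\approx\Delta_a n$. It then takes all permutations placing exactly $n_{ia}\approx q_{ia}n$ elements of $B_i$ into $R_a$, and counts them exactly via a multinomial ratio with exponent $\II(Z;A)\le\II(Z;T)$. To bound $|\F_n|$, it notes that any prefix ending inside $R_a$ has class profile sandwiched between $(F_i(t_{a-1}))_i$ and $(F_i(t_a))_i$. A fine enough partition then controls its entropy, with no separate charge for intermediate sets.
\item Your random-coding version also works. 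You sample $T_i\sim F_{z_i}$, use the AEP for the conditional density, and count the order regions meeting a good set of volume $\gtrsim2^{-n(\II(Z;T)+\varepsilon)}$. You then pay $2^{n\,O(\hbin(\delta))}$ for prefixes lying between typical grid prefixes. Typicality of the grid prefixes already forces $|A_{k+1}\setminus A_k|\le n(\delta+\varepsilon)$, so your resolution of the ``main obstacle'' is adequate.
\item The AEP step is legitimate because, for finite $Z$, $p_i(F_i(t)-F_i(s))\le t-s$. So each conditional law has a density bounded by $1/p_i$.
\item The paper's version is more elementary: no densities, no typicality, and an exact count. Yours is the standard channel-coding argument and works with $\II(Z;T)$ directly.
\end{itemize}

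\textbf{Finite-alphabet reduction.} In step~(i) of this direction, data processing only controls the $\II$ side. Coarsening $Z$ can only increase each $\HH(\1_{\{T\le t\}}\mid\cdot)$. So you must quantize specifically the vector $\bigl(\Pbb(T\le t_a\mid Z)\bigr)_a$ at grid points. Then use uniform continuity of $\hbin$, together with $\HH(\1_{\{T\le t\}}\mid W)\le\HH(\1_{\{T\le t_a\}}\mid W)+\hbin(\varepsilon)$ for off-grid $t$; this is Lemma~\ref{lem:finite-valued-reduction}.

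\textbf{Removing the final slack.} Your removal of the $\varepsilon'$ slack matches the paper's use of continuity of $\Dent$ (Lemma~\ref{lem:Dent-concave}). Continuity there also needs $\Dent$ to be finite, which a binning witness supplies.
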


We will also derive upper and lower bounds for $\Dent_\sigma$, which, via the above theorem, entails bounds on the quantity $\gamma = S^2/D_S$ that appears as the base of the exponential in the TSP space-time tradeoff. Outside of this concrete algorithmic application, of course, other functions of $S$ and $D_S$ may be of interest as well, and these can, in principle, be optimized through similar methods.

We highlight the fact that the correspondence between the two settings is particularly clean: $h(Z,T)=\log_2 S$ and $\II(Z;T)=-\log_2D$.

\subsection{Concrete bounds}

We show how this entropy formulation of the problem makes it amenable to primal and dual certificates for rigorous computational upper and lower bounds respectively. In particular, we obtain the following.
\begin{theorem}\label{thm:prec}
    The best space-time product achievable for TSP (and other constant degree idempotent permutation problems) in the framework of{~\cite{ANW, DK}} is $\gamma^{n+o(n)}$, where $\gamma = \inf_{1<S\leq 2}\{(S^2/D_S)\}$ is such that
    \[3.1818 < \gamma < 3.18184.\]
\end{theorem}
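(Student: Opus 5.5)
By Theorem~\ref{thm:comb_equals_entropy}, $\log_2\gamma=\inf_{(Z,T)}\{2h(Z,T)+\II(Z;T)\}$ over admissible pairs. Combined with the reduction of~\cite{ANW,DK}, which turns any finite family with parameters $S,D$ into a TSP algorithm with space-time product $(S^2/D)^{n+o(n)}$, the statement therefore splits into two independent certificates. For the upper bound $\gamma<3.18184$ I need one explicit admissible pair with $2h+\II<\log_2 3.18184$. For the lower bound $\gamma>3.1818$ I need a dual certificate showing that every admissible pair has $2h+\II\ge\log_2 3.1818$. The direction $D^\ell_\sigma\ge\Dent_\sigma$ of Theorem~\ref{thm:comb_equals_entropy} turns the primal certificate into an actual finite set system. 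The TSP claim then follows from~\cite{ANW,DK}.

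\textbf{Primal side.} I would take $Z$ finite-valued, with $T$ given $Z=z$ having a piecewise-constant density on a fixed grid of $[0,1]$, subject to the mixture being uniform. Writing $p_z$ for the weights and $F_z$ for the conditional CDFs, we have
\[
\II(Z;T)=\sum_z p_z\int_0^1 f_z\log_2 f_z
\qquad\text{and}\qquad
\HH(\1_{\{T\le t\}}\mid Z)=\sum_z p_z\,\hbin(F_z(t)).
\]
Since $\hbin$ is concave and the $F_z$ are piecewise linear, the supremum over $t$ can be bounded rigorously. On each grid cell, either use the endpoint values plus a derivative bound, or evaluate exactly at stationary points. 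I would optimize the parameters numerically by alternating or gradient descent, then verify the final objective with interval arithmetic. The natural guess for the structure is symmetric under $t\mapsto 1-t$, with the active threshold $t$ forming a plateau where many thresholds are simultaneously tight.

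\textbf{Dual side.} This is the main obstacle, because the infimum ranges over all joint laws. Two Lagrangian relaxations are available:
\begin{itemize}
\item replace $\sup_t$ by $\int\HH(\1_{\{T\le t\}}\mid Z)\,d\mu(t)$ for a probability measure $\mu$ on $[0,1]$;
\item replace the single factor $2$ in front of $h$ by pairing it with such a measure.
\end{itemize}
The relaxed objective $2\sum_z p_z\int\hbin(F_z)\,d\mu+\II(Z;T)$ decouples across values of $z$, apart from the constraint $\sum_z p_zf_z=1$. Dualizing that constraint with a potential $\lambda(t)$ gives, for every admissible pair,
\[
\log_2\gamma\ \ge\ \int\lambda\,dt+\inf_{F}\Bigl[\,2\int\hbin(F)\,d\mu+\int f\log_2 f-\int\lambda f\,\Bigr],
\]
where the inner infimum is over single CDFs $F$ on $[0,1]$ with density $f$. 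This holds by weak duality.

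So the certificate is a pair $(\mu,\lambda)$, taken from the numerically optimized primal solution, and the remaining work is a rigorous lower bound on the one-dimensional variational problem in $F$. I would try discretizing $[0,1]$ and bounding the inner problem by a dynamic program over the value of $F$ at grid points. Monotonicity of $F$ together with concavity of $\hbin$ and convexity of $f\log f$ give rigorous per-cell lower bounds. The discretization error would be controlled explicitly so that the final bound exceeds $\log_2 3.1818$; this error control is where I expect most of the effort.
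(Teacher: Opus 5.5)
Your plan is correct. On the lower-bound side it is the paper's argument in continuous form; on the upper-bound side it takes a different, more generic route.

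\textbf{Dual side (same method).} The paper first coarsens $T$ to $Y=\lceil KT\rceil$. By data processing, this gives $\log_2\gamma\ge\gamma_K$ with no error term, where $\gamma_K$ is the $K$-bin analogue of $\inf\{2h+\II\}$. It then certifies $\gamma_K\ge E$ with weights $\lambda_k\ge0$, $\sum_k\lambda_k\le2$, on the thresholds $k/K$ and a vector $a\in\R^K$. The inequality on $\triangle_K$ is verified by a dynamic program over the cumulative sums $s_j$, with the value variable cut into $M$ bins; $K=400$ and $M=300\,000$ give $E>1.669845$. This is your $(\mu,\lambda)$ with $2\mu$ atomic on $\{k/K\}$ and $\lambda$ a step function. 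For such choices the inner infimum over $F$ is attained by $F$ linear on each cell (Jensen for $x\log_2 x$). So you need not control any discretization error in $t$, only the discretization of the values of $F$ in the dynamic program.

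\textbf{Primal side (different route).} You propose a numerical search over finite-valued $Z$ with piecewise-constant densities. The paper instead takes $Z$ uniform on $\{1,2\}$ and an explicit CDF path $(F_1,F_2)$: a straight segment from $(0,0)$ to the tangency point $(x_\sigma,y_\sigma)$, then the level curve $\hbin(x)+\hbin(y)=2\sigma$, then the reflected segment to $(1,1)$. For this family $h\le\sigma$ is immediate, and $\II(Z;T)=1-a_\sigma$ with $a_\sigma$ given by a one-dimensional integral. The upper bound thus reduces to a scalar optimization in $\sigma$ plus a rigorously bounded quadrature, with value at most $1.669858$ at $\sigma^*=0.515617$.

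\textbf{What each route buys.} Your generic search is equally rigorous in principle. The structure you predict (symmetry under $t\mapsto1-t$, a plateau of simultaneously tight thresholds) is exactly this boundary arc. However, $\log_2 3.18184\approx1.669861$ lies only about $3\cdot10^{-6}$ above the paper's witness value. Your piecewise-linear CDFs on a fixed grid would therefore have to follow the smooth arc very accurately. Reaching that by blind nonconvex optimization over the number of states and all densities is far more costly than the paper's one-parameter analytic family, which also identifies that two states suffice.
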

The upper bound is given in Theorem~\ref{thm:gamma_ub}, and the lower bound in Theorem~\ref{thm:gamma_lb}.

We expect one can tighten the gap between the upper and lower bounds arbitrarily at the cost of more computation. We can also find bounds of the same nature on $D_S$ (or equivalently on $S^2/D_S$) as a function of $S$, yielding points close to the optimal space-time tradeoff curve achievable within the \cite{ANW, DK} framework. The upper and lower bound curves we obtain on $D_S$ are illustrated in Figure \ref{fig:density_bounds}. The implied bounds on $S^2/D_S$ (the relevant value for space-time tradeoffs) are illustrated in Figure \ref{fig:TS_bounds}. 

The upper bound curve on $D_S$ is obtained via lower bounds of the form $S^\Lambda/D_S \geq 2^{\mu_\Lambda}$ for different values of $\Lambda$. Each such bound directly gives $D_S \leq S^\Lambda/2^{\mu_\Lambda}$, and the illustrated upper bound curve of Figure \ref{fig:density_bounds} is the lower envelope of those individual curves. The lower bound curve on $D_S$ is obtained by finding certificates through optimization at discrete points, then interpolating between these points using the concavity of the function $\sigma \to \log_2 D_{2^\sigma}$. Both approaches lead to piecewise-defined curves which do not have a clean and short explicit closed-form expression. 

For the lower bound on $D_S$ (and thus the upper bound on $S^2/D_S$), we also derive a slightly weaker but more convenient closed-form expression as follows (see Figure~\ref{fig:density_closed_lb} for an illustration).

\begin{restatable}{theorem}{restateTCBAS}\label{thm:closed-bound-all-S}
For every $0<\sigma\leq 1$, let
\[
  k=\lfloor 1/\sigma\rfloor,
  \qquad \lambda=k(k+1)\sigma-k.
\]

Then $\lambda\in[0,1]$ and
\begin{equation*}
  \Dent_\sigma
  \ge
  -\lambda\log_2 k-(1-\lambda)\log_2(k+1)+I(1-\sigma),
\end{equation*}
where $I \geq 0.654865$.

Equivalently \textup{(}using the identity $\log_2 D_{2^\sigma} = \Dent_\sigma$\textup{)}, for $S=2^\sigma \in (1,2]$,
\begin{equation*}
  D_S \geq  k^{-\lambda}(k+1)^{\lambda-1}2^{I(1-\sigma)}.
\end{equation*}
\end{restatable}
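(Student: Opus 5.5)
Since $\Dent_\sigma$ is minus an infimum, a lower bound on it needs only an explicit admissible pair $(Z,T)$ with $h(Z,T)\le\sigma$ and $\II(Z;T)\le \lambda\log_2 k+(1-\lambda)\log_2(k+1)-I(1-\sigma)$. The plan is to combine two families of constructions: a ``chain'' construction, whose cost is linear in $\sigma$ with slope given by the constant $I$, and a ``piecewise'' construction attaining the term $\lambda\log_2k+(1-\lambda)\log_2(k+1)$. Time-sharing then mixes them. Concretely, I would prove a mixing lemma. If $(Z_1,T)$ achieves $(h_1,\II_1)$ and $(Z_2,T)$ achieves $(h_2,\II_2)$, let $Z=(B,Z_B)$ with $B\sim\mathrm{Bernoulli}(p)$ independent of $T$. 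Then $\HH(\1_{\{T\le t\}}\mid Z)=p\,\HH(\cdot\mid Z_1)+(1-p)\HH(\cdot\mid Z_2)$. Hence $h\le p h_1+(1-p)h_2$, since a sup of a sum is at most the sum of sups. Also $\II(Z;T)=p\II_1+(1-p)\II_2$. This makes $\sigma\mapsto\Dent_\sigma$ concave, which is exactly the concavity the paper uses, and reduces the theorem to exhibiting good points.

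\textbf{Step 1 (the points $\sigma=1/k$).} Take $Z=\lceil kT\rceil$, so $T$ given $Z$ is uniform on an interval of length $1/k$. Then $\II(Z;T)=\log_2 k$, and $\HH(\1_{\{T\le t\}}\mid Z)=\frac1k\hbin(\{kt\})\le 1/k$. So $\Dent_{1/k}\ge-\log_2k$. On $[1/(k+1),1/k]$, interpolating these two points linearly in $\sigma$ gives exactly $-\lambda\log_2k-(1-\lambda)\log_2(k+1)$. Indeed, $\lambda=k(k+1)\sigma-k$ equals $0$ at $\sigma=1/(k+1)$ and $1$ at $\sigma=1/k$. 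This is the baseline without the $I(1-\sigma)$ term.

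\textbf{Step 2 (the extra gain).} The gain must come from a better construction near $\sigma=1$, improved by a quantity linear in $1-\sigma$, which must then propagate to every $k$. I would look for a single ``refinement'' operation on admissible pairs that works as follows. Within each cell $\{Z=z\}$, where $T$ is uniform on an interval, replace the trivial conditioning by a fixed optimized gadget $(Z',T')$ rescaled to that cell. The aim is that the gadget lowers $\II$ by $I\cdot\delta$ while raising the per-cell entropy cost $h$ by $\delta$. Because $\HH(\1_{\{T\le t\}}\mid Z)$ is a cell-weighted average and only the cells straddling $t$ contribute, the gadget's effect should scale uniformly with the cell size. The term $I(1-\sigma)$ then becomes the total ``unused entropy budget'' $1-\sigma$ converted at rate $I$. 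Here $1-\sigma$ is the budget left relative to the trivial pair $Z=$const, which has $h=1$ and $\II=0$. The constant $I\ge0.654865$ would be certified numerically: it is the slope obtained from a computed primal certificate near $\sigma=1$, namely a finite discretized $Z$ whose $h$ and $\II$ are checked rigorously with interval arithmetic.

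\textbf{Main obstacle.} The hard step is showing that the gain $I(1-\sigma)$ survives at every scale $k$, not just near $\sigma=1$. Concretely, the refined pair built on cells of length $1/k$ or $1/(k+1)$ must still satisfy $h\le\sigma$ at the worst threshold $t$. The supremum over $t$ does not decompose nicely once cells are no longer trivial. I would first try a convexity argument. Write the target as an affine combination of the Step~1 points and one optimized point at $\sigma$ close to $1$. Then verify that the combined function of $\sigma$ is majorized by the concave hull of the achievable $(\sigma,\Dent_\sigma)$ pairs, using the mixing lemma repeatedly. If that fails, I would fall back on a direct product construction $Z=(\lceil kT\rceil, Z')$. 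Here $Z'$ is the rescaled gadget, and the threshold is bounded cellwise.
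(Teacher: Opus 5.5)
There is a genuine gap in Step~2, and Step~2 is where the entire $I(1-\sigma)$ term comes from. Your Step~1, your mixing lemma, and the final interpolation are correct. They match the paper's concavity lemma and the last step of its proof, where the coefficient of $I$ interpolates to $1-\sigma$. Neither of your two routes for Step~2 can work.

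Mixing never produces anything above the concave hull of the witnesses you feed in. The target $g(\sigma)=-\lambda\log_2k-(1-\lambda)\log_2(k+1)+I(1-\sigma)$ is itself concave, with kinks exactly at the grid points, where $g(1/k)=-\log_2k+\frac{k-1}{k}I$. So a chord can reach $g(1/k)$ only if some mixed witness lies strictly above $g$. Your Step~1 points lie below $g(1/k)$ by $\frac{k-1}{k}I$. A point $\sigma_0$ near $1$ can exceed $g(\sigma_0)=-(2-I)(1-\sigma_0)$ by at most $(2-I)(1-\sigma_0)$, since $\Dent\le0$. That is far too little to compensate; for example, at $\sigma=1/2$ you stay well below $g(1/2)=I/2-1$. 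Note also that the gain $\frac{k-1}{k}I$ grows with $k$, so it is not a near-$\sigma=1$ effect being propagated downward.

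Your fallback fails for a structural reason. If $Z=(\lceil kT\rceil,Z')$, or more generally if $Z$ determines $\lceil kT\rceil$, then data processing gives $\II(Z;T)\ge\II(\lceil kT\rceil;T)=\log_2k$. Indeed, the rescaled gadget yields exactly $h=h(Z',T')/k$ and $\II=\log_2k+\II(Z';T')$. Refining inside cells only adds information.

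The missing idea is the opposite move: let adjacent cells \emph{overlap}, so that $T$ does not determine $Z$ and $\HH(Z\mid T)>0$. The paper takes $Z$ uniform on $[k]$ and lets the conditional CDFs $(F_1,\dots,F_k)$ trace a monotone path with $\frac1k\sum_iF_i(t)=t$:
\begin{itemize}
\item raise $F_1$ from $0$ to $1/2$;
\item for $j=1,\dots,k-1$, move $(F_j,F_{j+1})$ from $(1/2,0)$ to $(1,1/2)$ along the arc $\hbin(F_j)+\hbin(F_{j+1})=1$;
\item finally raise $F_k$ from $1/2$ to $1$.
\end{itemize}
At every moment the coordinate entropies sum to at most $1$, so $h(Z,T)\le1/k$.

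To get $\II(Z;T)$, compute $\HH(Z\mid T)$ as a path integral, using the posterior $dx_i/\sum_r dx_r$ on each infinitesimal piece. Single-coordinate moves contribute $0$, and each of the $k-1$ arcs contributes $I/k$. This gives $\II(Z;T)=\log_2k-\frac{k-1}{k}I$, i.e.\ $\Dent_{1/k}\ge-\log_2k+\frac{k-1}{k}I$.

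Here $I=I_{1/2}$ is not a slope near $\sigma=1$ fitted by a certificate. It is the specific arc integral from the two-state witness at $\sigma=1/2$ in \cref{thm:two-state-corridor-close}, where $\HH(Z\mid T)=I/2$, and it is numerically certified to be at least $0.654865$. With these grid points in hand, your interpolation finishes the proof.
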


\subsection{A further application}\label{sec14}

We also give a different application of our bounds on $\Dent_\sigma$, beyond the algorithmic improvements for TSP and other permutation problems.

Consider the Boolean lattice $2^{[n]}$ where two subsets of $[n]$ are \emph{comparable} if one includes the other. An \emph{antichain} is a collection of pairwise incomparable subsets of $[n]$. An antichain is maximal if it cannot be extended by adding one more set. A \emph{fibre} is a subset of $2^{[n]}$ that intersects every maximal antichain of $2^{[n]}$.

Lonc and Rival~\cite{LoncRival1987} raised the question of the minimal size of a fibre in $2^{[n]}$.
Denoting this quantity by $\mathrm{MF}_n$, we observe $\mathrm{MF}_n \leq \fO(2^{n/2})$, since the family of all sets comparable to any given set of the middle layer is a fibre. Lonc and Rival~\cite{LoncRival1987} conjectured this upper bound to be tight.  Duffus, Sands and Winkler~\cite{DuffusSW90} made some progress towards the conjecture, showing $\mathrm{MF}_n \geq \Omega(1.25^{n})$, which was later improved by {\L}uczak to $\mathrm{MF}_n \geq \Omega(2^{n/3})$, as reported by Duffus and Sands~\cite{DuffusSands2001}. There has been no further progress since.

As a corollary of our results we obtain the following.

\begin{restatable}{theorem}{restateFIBRE}\label{thm:fibre}
    Let $\sigma\in (0,1]$. If $D^{\ell}_\sigma \leq -1$, then $\mathrm{MF}_n \geq 2^{\sigma n}$.

    In particular, since $D^{\ell}_{0.37586} \leq -1$, we have
    \[\mathrm{MF}_n \geq 2^{0.37586 n}.\]
\end{restatable}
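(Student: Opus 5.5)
The plan is to argue by contraposition via the combinatorial quantity $D^\ell_\sigma$ of \S\,\ref{sec11}. From a small fibre $\F\subseteq 2^{[n]}$ we will build a set system $\mathcal G$ with
\[
|\mathcal G|\le 2^{o(n)}|\F| \qquad\text{and}\qquad C(\mathcal G)\ge 2^{-n-o(n)}\, n!,
\]
that is, $S^\ell(\mathcal G)\le \tfrac1n\log_2|\F|+o(1)$ and $D^\ell(\mathcal G)\ge -1-o(1)$. This density is exactly the one achieved by the extremal example: the fibre of all sets comparable to a middle set $X$ contains all $(n/2)!^2\approx n!\,2^{-n}$ chains through $X$. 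The example suggests the target inequality is essentially tight.

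\textbf{Key combinatorial lemma (expected main obstacle).} We must show that every fibre contains, or can be cheaply augmented to contain, about $n!/2^n$ full chains. I would try the following route. Fix a fibre $\F$. For each permutation $\pi$ with chain $c_\pi$, look at the antichain obtained greedily from sets avoiding $\F$ near the chain. Since $\F$ meets every maximal antichain, any antichain contained in $2^{[n]}\setminus\F$ fails to be maximal: some set of $\F$ is incomparable to all its members. I would use this to locate a set $X$ together with many chains through $X$ whose lower part lies in $\F\cap[\varnothing,X]$ and whose upper part lies in $\F\cap[X,[n]]$.

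This should proceed recursively, in the style of the Duffus--Sands--Winkler and {\L}uczak arguments. The restriction of a fibre to an interval $[\varnothing,X]$ or $[X,[n]]$ is again fibre-like in that interval. Counting then gives at least $|X|!\,(n-|X|)!/\mathrm{poly}(n)\ge n!\,2^{-n}/\mathrm{poly}(n)$ chains in a family $\mathcal G$ of size $\mathrm{poly}(n)\cdot|\F|$, for instance $\F$ together with polynomially many auxiliary sets.

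If only a weaker statement of the form $C(\mathcal G)\ge n!\,2^{-(1+\varepsilon)n}$ comes out, I would instead use tensorization. Since $D_S$ is a supremum over all $n$, products $\F^{\otimes k}\subseteq 2^{[kn]}$ preserve both $S$ and $D$ up to the multinomial interleaving factor. This lets us absorb the $o(n)$ losses, so the lemma only needs to hold up to subexponential factors.

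\textbf{Conclusion.} Suppose $D^\ell_\sigma\le -1$ but $|\F|<2^{\sigma n}$, so that $|\F|\le 2^{\sigma' n}$ with $\sigma'<\sigma$. Then $\mathcal G$ witnesses $D^\ell_{\sigma'+o(1)}\ge -1-o(1)$. The function $\sigma\mapsto D^\ell_\sigma=\Dent_\sigma$ (Theorem~\ref{thm:comb_equals_entropy}) is nondecreasing and concave, and it is strictly increasing below $1$ since $D^\ell_1=0>-1$. Therefore $D^\ell_\sigma>D^\ell_{\sigma'}\ge -1$, a contradiction; a little care is needed with the $o(1)$ terms, which can be handled by taking $n\to\infty$ or by the product trick above. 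Hence $\mathrm{MF}_n\ge 2^{\sigma n}$, up to the asymptotic slack that the exact statement absorbs.

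The numerical instance $D^\ell_{0.37586}\le-1$ would then be read off from the upper-bound certificates on $D_S$, of the form $S^\Lambda/D_S\ge 2^{\mu_\Lambda}$, developed later in the paper.
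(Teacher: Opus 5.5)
\textbf{Gap: the key lemma is unproved.} Your overall frame matches the paper's: show that a fibre $\mathcal A$ has chain density about $\tfrac12$, then compare with $D_{2^\sigma}$. But the one substantive ingredient, your ``key combinatorial lemma'' that every fibre contains roughly $n!/2^{n}$ maximal chains, is not proved, and the sketch you give for it does not work as written. You claim that the restriction of a fibre to $[\varnothing,X]$ or $[X,[n]]$ is ``fibre-like''. It need not be: a maximal antichain of $[\varnothing,X]$ extends to a maximal antichain of $2^{[n]}$, but that extension may meet $\mathcal F$ only outside the interval. You also claim that one can find a set $X$ with $|X|!\,(n-|X|)!/\mathrm{poly}(n)$ chains through it lying in $\mathcal G$. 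That would require nearly all chains of both intervals to be present, and nothing in the fibre property as you use it gives this.

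\textbf{What the paper does instead.} The paper does not prove this lemma. It imports the Duffus--Sands--Winkler theorem that every fibre of $B_n$ contains at least $n!/2^{n-1}$ maximal chains. With that fact the argument is two lines and needs no augmented family $\mathcal G$. One has $D(\mathcal A)\ge (2^{1-n})^{1/n}=\tfrac12\,2^{1/n}>\tfrac12$. If $S(\mathcal A)\le 2^\sigma$, then $D(\mathcal A)\le D_{2^\sigma}\le\tfrac12$, because $D_{2^\sigma}$ is a supremum over all families at all $n$. This is a contradiction, so $|\mathcal A|>2^{\sigma n}$.

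\textbf{Your loss-handling also fails.} The tensorization fallback cannot absorb losses. For $\mathcal F^{\otimes k}\subseteq 2^{[kn]}$ one has $|\mathcal F^{\otimes k}|=|\mathcal F|^k$ and $C(\mathcal F^{\otimes k})=C(\mathcal F)^k\,(kn)!/(n!)^k$, so $S$ and $D$ are preserved \emph{exactly}. Products carry a loss at fixed $n$ over unchanged. A constant loss $C(\mathcal G)\ge n!\,2^{-(1+\epsilon)n}$ would therefore just require the stronger hypothesis $D^\ell_\sigma\le -1-\epsilon$. Likewise, $\mathrm{poly}(n)$ losses in $|\mathcal G|$ and $C(\mathcal G)$, fed through your monotonicity and concavity argument, give only $\mathrm{MF}_n\ge 2^{\sigma n}/\mathrm{poly}(n)$. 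The reason is that the gap $\sigma-\sigma'$ coming from $|\mathcal F|<2^{\sigma n}$ can be exponentially small. The statement claims the exact bound $\mathrm{MF}_n\ge 2^{\sigma n}$ for every $n$, with no asymptotic slack to absorb this. In the paper, exactness comes from the Duffus--Sands--Winkler count being loss-free together with $2^{1/n}>1$, so no monotonicity or concavity of $D^\ell_\sigma$ is needed at all.

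Your plan for the numerical instance is right. The paper uses $(\Lambda,\mu_\Lambda)=(3,2.1276)$, which gives $D^\ell_\sigma\le 3\sigma-2.1276\le -1$ for $\sigma\le 0.37586$.
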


\newpage

\begin{figure}
    \centering
    \includegraphics[width=0.85\linewidth]{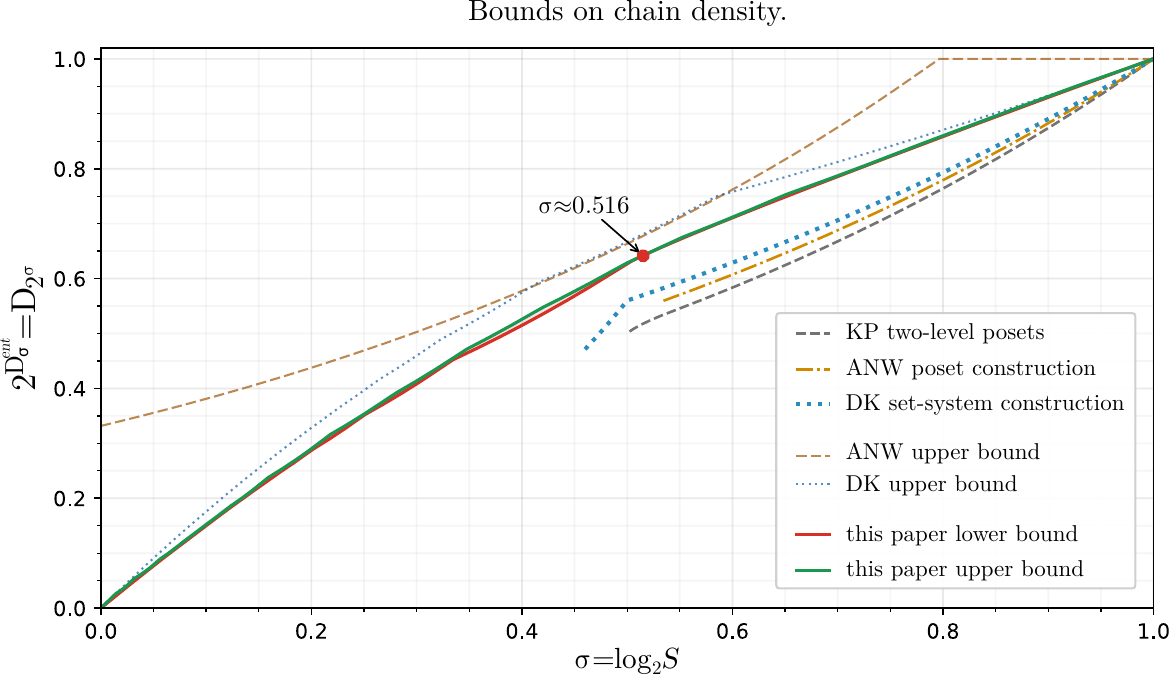}
    \caption{Upper and lower bound curves for the tradeoff between (logarithmic) set system size $\sigma = \log_2{S} = S^{\ell}$ (horizontal axis) and set system density $D_S = D_{2^{\sigma}}$ (vertical axis). Red dot indicates the best point-wise tradeoff, i.e., the minimal $S^2 / D_S$ value. Solid red and green indicate the lower and upper bound curves we derive. Note that these are piecewise-defined curves resulting from certificates computed at discrete points; while the bounds could be further refined through more computation, a compact, closed-form description of the optimal tradeoff curve is not available. Upper and lower bounds found in previous works \cite{KoivistoParviainen2010,ANW, DK} are also shown. }
    \label{fig:density_bounds}
\end{figure}

\vspace{-0.9cm}
\enlargethispage{3\baselineskip}

\begin{figure}[H]
    \centering
    \includegraphics[width=0.85\linewidth]{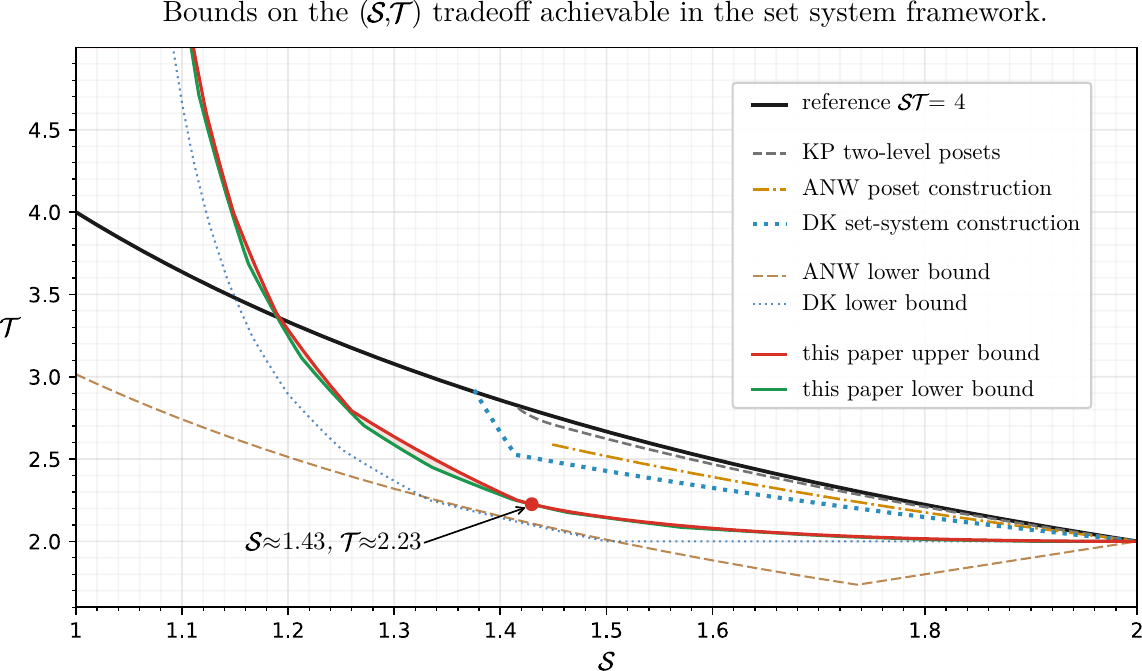}
    \caption{Tradeoff curves ($\cT$ vs.~$\cS$) for the best exponential running time $\cT^n$ and space $\cS^n$ achievable for the TSP, with optimal space-time product shown as a red point. Classical $\cS \cdot \cT = 4$ tradeoff shown for reference as black line. It may seem that the method improves this baseline only for $\cS \gtrsim 1.19$; note however, that the improvements can be extended to the lower range of $\cS$, by combining the method with Gurevich-Shelah divide-and-conquer. For more details on this, by-now-standard technique, refer to~\cite[Lemma~3]{Quantum}, \cite[Lemma~3.4]{DK}.  }
    \label{fig:TS_bounds}
\end{figure}

\paragraph{Structure of the paper.} In \S\,\ref{sec:prelim} we introduce the basic information theory toolkit we need. In \S\,\ref{sec3} we prove the two-way correspondence between the combinatorial and information-theoretic formulations of the problem, proving Theorem~\ref{thm:comb_equals_entropy}. Sections \S\,\ref{sec4}, \S\,\ref{sec5} are dedicated to proving lower bounds for $\Dent_\sigma$, yielding the algorithmic side of Theorem~\ref{thm:prec} and Theorem~\ref{thm:closed-bound-all-S}. Section \S\,\ref{sec6} contains the upper bounds for $\Dent_\sigma$, yielding the barrier side of Theorem~\ref{thm:prec}. 
In \S\,\ref{sec8} we describe the application to bounding minimal fibres in the Boolean lattice, proving Theorem~\ref{thm:fibre}. In \S\,\ref{sec:concl} we conclude with some open questions.

\paragraph{Disclosure on the use of AI. } The results of the paper were obtained with the help of OpenAI's ChatGPT 5.5 tool, which was also used to verify and reformulate arguments and to explore the problem structure, which materially affected all technical sections of the paper. The authors assume full responsibility for the contents, including its correctness and originality. 

\section{Preliminaries}\label{sec:prelim}

Recall the definition of $D_S$ as a supremum of the density $D(\cF)$ over all families with finite ground set, with $S(\cF) \leq S$. For $\alpha\le 0$, a \emph{witness} for a lower bound $\upalpha \leq D_S^\ell$, or simply a \emph{set system witness}, is a family $\cF$ with $S(\cF) \leq S$ and $D^\ell(\cF) \geq \upalpha$. Note that such a witness also implies a corresponding upper bound on $\gamma$.

Similarly, a witness for a lower bound $\upalpha \leq \Dent_\sigma$ is a pair $(Z,T)$, where $T\sim\Unif[0,1]$ and $Z$ is an arbitrary random variable, where $h(Z,T) \leq \sigma$, and $-\II(Z;T) \geq \upalpha$. We refer to it as a \emph{$(Z,T)$ witness}.

\paragraph{Information theory. }

In the following, we briefly list some properties and facts from information theory
that we use; all of them are standard, e.g., see~\cite{cover_book}.

We refer to~\cite{cover_book} for the standard definitions of the entropy $\HH(X)$ of a random variable $X$ or $\HH(p)$ of a probability vector $p$, and the conditional entropy $\HH(X \mid Y)$, as well as for the definition of the mutual information $\II(X;Y)$ of two random variables, and the conditional mutual information $\II(X;Y \mid W)$.

An easy upper bound on the entropy is $
  \HH(X)\le\log_2|\mathcal{X}|$,
where $\mathcal{X}$ is the support set of $X$, and applying a function cannot increase entropy: $$\HH(f(X))\le\HH(X).$$

Since conditioning cannot increase entropy,
\[
  \HH(X\mid Y,W)\le\HH(X\mid Y);
\]
in particular, conditioning on a function of $Y$ instead of $Y$ can only
increase entropy: $$\HH(X\mid f(Y))\ge\HH(X\mid Y).$$

Finally, $\HH(X\mid Y)=0$ if and only if $X$ is a function of $Y$ (up to a
set of probability zero).

When $X$ is discrete,
\[
  \II(X;Y)=\HH(X)-\HH(X\mid Y)\;\le\;\HH(X);
\]
in particular $\II(X;Y)=\HH(X)$ when $X$ is a function of $Y$.
For the continuous entropy, if $Y$ has a density and the
quantities below are finite, then the analogous statement holds:
\[
  \II(X;Y)=\Hc(Y)-\Hc(Y\mid X).
\]

The chain rule for mutual information states
\[
  \II(X,W;Y)=\II(W;Y)+\II(X;Y\mid W).
\]
Two consequences we use: if $R$ is independent of the pair $(X,Y)$, then
$\II(X;(Y,R))=\II(X;Y)$; and if $B$ is independent of $Y$, then
$\II((B,X);Y)=\II(X;Y\mid B)$.

The \emph{data processing inequality} says that for every
(measurable) function $f$,
\[
  \II(X;f(Y))\le\II(X;Y).
\]
In words, processing a variable cannot increase mutual information. 
By symmetry the same holds for a function applied to $X$.

\section{Equivalence between the combinatorial and entropy formulations}\label{sec3}

In this section we show the main equivalence between the combinatorial and information-theoretic formulations of our tradeoff, proving Theorem~\ref{thm:comb_equals_entropy}. 

\subsection{From set system witnesses to $(Z,T)$ witnesses}
Here we show that any set system witnessing 

$D^{\ell}_{\sigma} \geq \alpha$ can be used to construct an 
admissible pair $(Z,T)$ witnessing $\Dent_\sigma \geq \alpha$, and thus $D^{\ell}_{\sigma} \leq \Dent_\sigma$.

\begin{lemma}\label{lem:set-to-entropy-witness}
Let $\F\subseteq2^{[n]}$ satisfy $C(\F)>0$.  Then there is an admissible pair
$(Z,T)$ such that
\[
  h(Z,T)\le \frac1n\log_2|\F|=S^{\ell}(\cF),
  \qquad
  \II(Z;T)=\frac1n\log_2\frac{n!}{C(\F)}=- D^{\ell}(\F).
\]
\end{lemma}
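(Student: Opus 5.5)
We construct $(Z,T)$ from a random chain of $\F$ with random ``activation times''. The plan is to reveal these times in a random order and compare both $h(Z,T)$ and $\II(Z;T)$ with the chain rule, applied once to a discrete vector and once to a continuous one.

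\textbf{Construction.} Let $\pi$ be a maximal chain of $\F$ chosen uniformly among the $C(\F)$ chains, viewed as a permutation of $[n]$. Independently, draw $n$ i.i.d.\ uniforms and sort them as $V_{(1)}<\dots<V_{(n)}$. Give element $\pi(i)$ the time $T_{\pi(i)}=V_{(i)}$, so the vector $\mathbf T=(T_1,\dots,T_n)$ is defined. Equivalently, $\mathbf T$ has density $n!/C(\F)$ on the union of those order-simplices $\{x_{\pi(1)}<\dots<x_{\pi(n)}\}$ for which $\pi$ is a chain of $\F$, and density $0$ elsewhere. The key structural fact is that for every $t$ the set $A_t=\{i:T_i\le t\}$ is a prefix set of the chain $\pi$, hence $A_t\in\F$.

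Next, independently of everything, pick a uniform permutation $\rho$ of $[n]$ and a uniform $k\in[n]$. Set
\[
  T=T_{\rho(k)},\qquad Z=\bigl(\rho,k,T_{\rho(1)},\dots,T_{\rho(k-1)}\bigr).
\]

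\textbf{Admissibility.} The coordinate $\rho(k)$ is uniform and independent of $(\pi,V)$. Hence the rank of that coordinate in $\pi$ is uniform on $[n]$, so $T$ is a uniformly chosen order statistic of $n$ i.i.d.\ uniforms. Therefore $T\sim\Unif[0,1]$.

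\textbf{Entropy bound.} Fix $t$ and write $X_i=\1_{\{T_i\le t\}}$. Since $X_{\rho(<k)}$ is a function of $T_{\rho(<k)}$, conditioning on the coarser variable can only increase entropy. Then, for fixed $\rho$, averaging over $k$ and applying the chain rule gives
\[
  \HH(\1_{\{T\le t\}}\mid Z)\le \frac1n\,\Ebb_\rho\sum_{k=1}^n \HH\bigl(X_{\rho(k)}\mid X_{\rho(1)},\dots,X_{\rho(k-1)}\bigr)=\frac1n\HH(A_t)\le\frac1n\log_2|\F|.
\]
The last inequality holds because $A_t$ takes values in $\F$. Taking the supremum over $t$ gives $h(Z,T)\le S^\ell(\F)$.

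\textbf{Mutual information.} Since $T$ is uniform on $[0,1]$, we have $\Hc(T)=0$. Hence
\[
  \II(Z;T)=-\Hc(T\mid Z)=-\frac1n\Ebb_\rho\sum_k \Hc\bigl(T_{\rho(k)}\mid T_{\rho(<k)}\bigr).
\]
By the chain rule for differential entropy, for each fixed $\rho$ the inner sum equals $\Hc(\mathbf T)$. Because the density of $\mathbf T$ is the constant $n!/C(\F)$ on its support, $\Hc(\mathbf T)=-\log_2\bigl(n!/C(\F)\bigr)$. Therefore $\II(Z;T)=\frac1n\log_2\frac{n!}{C(\F)}=-D^\ell(\F)$, with exact equality.

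\textbf{Main obstacle.} The delicate step is this last computation. The naive choice $Z=(J,T_{-J})$, which conditions on all other times, only yields an inequality through Han's inequality. The random-order prefix conditioning is what turns the average of conditional entropies into a telescoping chain rule, and it does so simultaneously for the discrete bound on $h$ and the continuous identity for $\II$. What must be checked carefully is that the chain rule is legitimate for the mixed discrete/continuous conditioning: $\rho$ and $k$ are discrete and independent of $\mathbf T$, and all the differential entropies are finite because the densities are bounded on a bounded support.
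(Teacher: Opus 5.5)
Your proof is correct and is essentially the paper's argument. The paper conditions i.i.d.\ uniforms $U_1,\dots,U_n$ on the event that their order gives a chain of $\F$, which is the same law as your $\mathbf T$. It then takes $T=U_J$ and $Z=(J,U_1,\dots,U_{J-1})$, and applies the chain rule exactly as you do, both for $\Hc(T\mid Z)$ and for the threshold entropy. The only difference is your random revealing order $\rho$: it is harmless but unnecessary, since the chain rule already telescopes for any fixed order, and the paper simply uses the identity.
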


\begin{proof}
Let $U_1,\ldots,U_n$ be independent uniform random variables on $[0,1]$.
Their increasing order $\pi$ determines a uniformly random maximal chain: if
$U_{\pi(1)}<\cdots<U_{\pi(n)}$, then the chain is
\[
  \varnothing,
  \{\pi(1)\},
  \{\pi(1),\pi(2)\},
  \ldots,
  [n].
\]
Let $E_\F$ be the event that this chain is contained in $\F$, and let
$\Qbb=\Pbb(\,\cdot\mid E_\F)$.  Since each of the $n!$ maximal chains is
equally likely,
\[
  \Pbb(E_\F)=\frac{C(\F)}{n!}.
\]

Choose $J$ uniformly from $[n]$, independently of the $U_i$'s under $\Qbb$, and
set
\[
  T=U_J,
  \qquad
  Z=(J,U_1,\ldots,U_{J-1}).
\]
The event $E_\F$ depends only on the order of the $U_i$'s, not on their sorted
values.  Therefore a uniformly chosen coordinate remains uniform after
conditioning on $E_\F$ and $T\sim\Unif[0,1]$ under $\Qbb$.

Under the original probability distribution, 
$(U_1,\ldots,U_n)$ is uniform on the cube $[0,1]^n$.  The event $E_\F$ is a
union of exactly $C(\F)$ disjoint regions of $[0,1]^n$, one for each valid maximal chain.  Each
order region has volume $1/n!$.  Therefore, under $\Qbb$, the vector
\[
  U=(U_1,\ldots,U_n)
\]
is uniform on a region of volume
\[
  \frac{C(\F)}{n!}.
\]
Consequently
\[
  \Hc(U_1,\ldots,U_n)=\log_2\frac{C(\F)}{n!} = D^{\ell}(\F).
\]
By the chain rule for continuous entropy,
\[
  \Hc(U_1,\ldots,U_n)
  =\sum_{j=1}^n\Hc(U_j\mid U_1,\ldots,U_{j-1}).
\]

Since $T=U_J$ and $Z=(J,U_1,\ldots,U_{J-1})$, we have
\begin{align*}
    \Hc(T \mid Z)   &= \Hc(U_J \mid J, U_1, \ldots, U_{J-1})\\
                    &= \sum_{j=1}^n \Qbb(J=j) \cdot \Hc(U_j | U_1, \ldots, U_{j-1})\\
                    &= \frac{1}{n}\sum_{j=1}^n \Hc(U_j | U_1, \ldots, U_{j-1})\\ 
                    &= \frac{1}{n}\Hc(U_1,\ldots,U_n).
\end{align*}

On the other hand, $T$ is uniform on $[0,1]$, so $\Hc(T)=\log_2 1=0$.  Thus
\[
\begin{aligned}
  \II(Z;T)
  &=\Hc(T)-\Hc(T\mid Z) \\
  &=-\frac1n\Hc(U_1,\ldots,U_n) \\
  &= -D^{\ell}(\cF).
\end{aligned}
\]

It remains to bound the threshold entropy.  Fix $t$ and define
$B_j(t)=\1_{\{U_j\le t\}}$ and $X_t=(B_1(t),\ldots,B_n(t))$. Interpret the random vector $X_t$ as a (random) subset of $[n]$ (a value of $1$ at some index means this index is included in the subset).  Note that this subset is a prefix set of the maximal chain determined by $U_1,\ldots,U_n$, and thus is in $\cF$ if the latter is a maximal chain contained in $\cF$. Under $\Qbb$ this is true with probability $1$, so the subset of $[n]$ represented by $X_t$ is in $\cF$, and $X_t$ has support at most $|\cF|$. Therefore
\[
  \HH(X_t)\le\log_2|\F|.
\]
Moreover $\1_{\{T\le t\}}=B_J(t)$, so
\begin{align*}
  \HH(\1_{\{T\le t\}}\mid Z)
  &=\frac1n\sum_{j=1}^n \HH(B_j(t)\mid U_{<j}) \\
  &\le \frac1n\sum_{j=1}^n
       \HH(B_j(t)\mid B_1(t),\ldots,B_{j-1}(t)) \\
  &=\frac1n\HH(X_t)
   \le \frac1n\log_2|\F| = S^{\ell}(\F).
\end{align*}

Taking the supremum over $t$ proves the lemma.
\end{proof}

\input{section_3_2_arx}

\section{Algorithm from Theorem~\ref{thm:prec} via a lower bound on $\Dent_\sigma$, for $\frac{1}{2}\leq \sigma\leq 1$}\label{sec4}
We now describe explicit witnesses for lower bounds on $\Dent_\sigma$ for $\sigma\geq \frac{1}{2}$. By Theorem~\ref{thm:comb_equals_entropy} and the framework from~\cite{ANW,DK}, this entails the algorithmic side of Theorem~\ref{thm:prec} (upper bound on $\gamma$).

The value of the resulting bounds will be expressed in terms of an integral which we now define.
For $\sigma\in(1/2,1)$, let $(x_\sigma,y_\sigma)\in(1/2,1)\times(0,1/2)$ be
defined by the equations
\begin{equation}\label{eq:tangent-point}
  \hbin(x_\sigma)+\hbin(y_\sigma)=2\sigma,
  \qquad
  x_\sigma\hbin'(x_\sigma)+y_\sigma\hbin'(y_\sigma)=0.
\end{equation}
Let $f_\sigma$ be the lower branch of the level curve
\begin{equation*}
  \hbin(x)+\hbin(f_\sigma(x))=2\sigma,
\end{equation*}
for $x_\sigma\le x\le1-y_\sigma$.  Define
\begin{equation*}
  I_\sigma=
  \int_{x_\sigma}^{1-y_\sigma}
  \Phi(f_\sigma'(x))\,dx,
  \qquad
  \Phi(r)=(1+r)\hbin\left(\frac1{1+r}\right),
\end{equation*}
and
\begin{equation*}
  a_\sigma=
  (x_\sigma+y_\sigma)
  \hbin\left(\frac{x_\sigma}{x_\sigma+y_\sigma}\right)
  +\frac12 I_\sigma.
\end{equation*}
At the endpoint $\sigma=1/2$, we interpret $(x_{1/2},y_{1/2})=(1/2,0)$ and
write
\begin{equation*}
  I=I_{1/2},
\end{equation*}
so that $a_{1/2}=I/2$.

\begin{figure}
    \centering
    \includegraphics[width=0.5\linewidth]{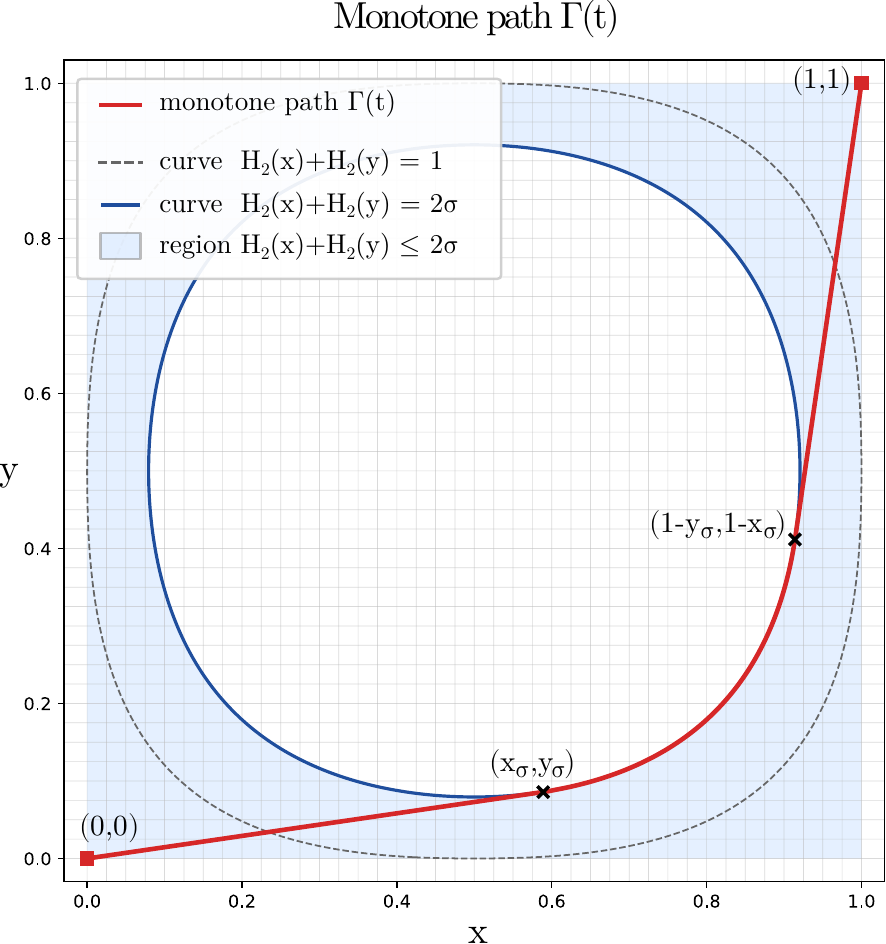}
    \caption{Illustration of the monotone path $\Gamma(t)=(F_1(t),F_2(t))$ used in the proof of Theorem~\ref{thm:two-state-corridor-close}. The path stays inside the region $\hbin(x)+\hbin(y)\leq 2\sigma$ at all times, hugging the boundary $\hbin(x)+\hbin(y) = 2\sigma$ between $(x_\sigma,y_\sigma)$ and $(1-y_\sigma,1-x_\sigma)$.}
    \label{fig:monotone_path}
\end{figure}

\begin{theorem}\label{thm:two-state-corridor-close}
For every $\sigma\in[1/2,1)$,
\[
  \Dent_{\sigma} \geq a_\sigma-1.
\]
\end{theorem}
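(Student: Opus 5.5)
The plan is to exhibit an explicit admissible pair $(Z,T)$ in which $Z=B$ is a single uniform bit, the "two-state corridor" of the figure. Given $B=i$, $T$ has a CDF $F_i$ with $F_1(t)+F_2(t)=2t$, so that $T\sim\Unif[0,1]$ unconditionally. With this choice, for each threshold $t$,
\[
  \HH(\1_{\{T\le t\}}\mid B)=\tfrac12\bigl(\hbin(F_1(t))+\hbin(F_2(t))\bigr).
\]
Hence $h(Z,T)\le\sigma$ as soon as the path $\Gamma(t)=(F_1(t),F_2(t))$ stays in the region $\{\hbin(x)+\hbin(y)\le 2\sigma\}$.

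For the mutual information, write $f_i=F_i'$ for the densities, so $f_1+f_2=2$. Then
\[
  \II(B;T)=\HH(B)-\HH(B\mid T)=1-\int_0^1\hbin\bigl(f_1(t)/2\bigr)\,dt .
\]
It therefore suffices to build a monotone path from $(0,0)$ to $(1,1)$ inside the region, parametrized so that $dx+dy=2\,dt$, with
\[
  \int_0^1\hbin(f_1/2)\,dt=a_\sigma .
\]
Using $dt=(dx+dy)/2$, the integral becomes the path functional
\[
  \tfrac12\int_\Gamma (dx+dy)\,\hbin\!\Bigl(\frac{dx}{dx+dy}\Bigr),
\]
which is reparametrization invariant. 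This lets us choose $\Gamma$ geometrically and recover $t$ afterwards.

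The path will consist of three pieces.
\begin{enumerate}
  \item The straight segment from $(0,0)$ to $(x_\sigma,y_\sigma)$. It contributes $\tfrac12(x_\sigma+y_\sigma)\hbin\bigl(x_\sigma/(x_\sigma+y_\sigma)\bigr)$.
  \item The lower branch $y=f_\sigma(x)$ of the level curve, for $x_\sigma\le x\le 1-y_\sigma$. Here $dy=f_\sigma'(x)\,dx$, so it contributes $\tfrac12\int\Phi(f_\sigma'(x))\,dx=\tfrac12 I_\sigma$.
  \item The straight segment from $(1-y_\sigma,1-x_\sigma)$ to $(1,1)$. It has increments $(y_\sigma,x_\sigma)$, so by symmetry of $\hbin$ it contributes the same amount as the first segment.
\end{enumerate}
The three contributions sum to exactly $a_\sigma$, giving $-\II(Z;T)=a_\sigma-1$.

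Several validity checks remain.
\begin{itemize}
  \item \emph{Monotonicity of the middle piece.} Since $x>1/2$, $\hbin$ is decreasing in $x$ on this range. Along the level set $\hbin(f_\sigma)$ must therefore increase, and on the lower branch $f_\sigma<1/2$, so $f_\sigma$ increases. Hence $f_\sigma'\ge0$, and both $F_i$ are nondecreasing.
  \item \emph{Endpoints of the middle piece.} The level curve is symmetric under $(x,y)\mapsto(1-y,1-x)$. So it passes through $(1-y_\sigma,1-x_\sigma)$, which also satisfies the tangency equation, and the pieces join continuously.
  \item \emph{The segments stay in the region.} Consider $g(s)=\hbin(sx_\sigma)+\hbin(sy_\sigma)$ for $s\in[0,1]$. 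It is concave, with $g(0)=0$, and $g'(1)=x_\sigma\hbin'(x_\sigma)+y_\sigma\hbin'(y_\sigma)=0$ by the tangency equation in \eqref{eq:tangent-point}. Hence $g$ is maximized at $s=1$ with value $2\sigma$, so the first segment lies in the region. The last segment follows by the same symmetry.
  \item \emph{Recovering the CDFs.} Reparametrize by $t=(x+y)/2$ to obtain $F_1,F_2$. These are continuous, piecewise smooth, and nondecreasing from $0$ to $1$, as required.
\end{itemize}

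The step I expect to be the main obstacle is existence and uniqueness of the tangent point $(x_\sigma,y_\sigma)\in(1/2,1)\times(0,1/2)$ for every $\sigma\in(1/2,1)$. One also needs the lower branch to be well defined and differentiable on $[x_\sigma,1-y_\sigma]$, and $I_\sigma$ to be finite; near $y\to0$ the derivative $\hbin'$ blows up. I would handle existence by an intermediate-value argument on the radial derivative $x\hbin'(x)+y\hbin'(y)$ along the level curve. Finiteness follows because $\Phi(r)\le\log_2(1+r)+\dots$ is integrable.

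The endpoint $\sigma=1/2$ can be treated directly with $(x_{1/2},y_{1/2})=(1/2,0)$. Alternatively it follows by a limiting argument, since $\Dent_\sigma$ is monotone in $\sigma$ and $a_\sigma$ varies continuously as $\sigma$ approaches $1/2$.
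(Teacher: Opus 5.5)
Your proposal is correct and follows the paper's proof essentially step for step. It uses the same uniform two-valued $Z$, the same three-piece path (segment to $(x_\sigma,y_\sigma)$, lower branch of the level curve, reflected segment), and the same infinitesimal Bayes computation of $\HH(Z\mid T)$ giving $a_\sigma$. Your concavity argument for $g(s)$ makes explicit what the paper dispatches in one line via the tangency condition.

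There are two slips, neither of which affects the proof. First, the reflected point $(1-y_\sigma,1-x_\sigma)$ satisfies the reflected tangency condition (for lines through $(1,1)$), not \eqref{eq:tangent-point} itself. Second, your fallback limiting argument at $\sigma=1/2$ would need continuity of $\Dent$ rather than monotonicity, which points the wrong way; your direct treatment with $(x_{1/2},y_{1/2})=(1/2,0)$ already covers that endpoint.
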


\begin{proof}
Take $Z$ uniform on $\{1,2\}$.  We specify the conditional CDFs
$F_1,F_2$ of $T$ given $Z=1,2$ by a monotone path
\[
  \Gamma(t)=(F_1(t),F_2(t))
\]
from $(0,0)$ to $(1,1)$ satisfying $(F_1(t)+F_2(t))/2=t$.  

For any fixed $t\in[0,1]$, we have
\begin{align*}
    \HH(\1_{\{T\leq t\}} \mid Z) 
    &= \frac{1}{2}\HH(\1_{\{T\leq t\}} \mid Z=1) + \frac{1}{2}\HH(\1_{\{T\leq t\}} \mid Z=2)\\
    &= \frac{1}{2}\hbin(F_1(t)) + \frac{1}{2}\hbin(F_2(t)).
\end{align*}
Thus, the quantity
\begin{align*}
    h(Z,T) = \sup_{0\leq t \leq 1} \HH(\1_{\{T\leq t\}} \mid Z)
\end{align*}
is at most $\sigma$ whenever the path $\Gamma(t)=(F_1(t),F_2(t))$ stays in the region $\hbin(x)+\hbin(y)\leq 2\sigma$ for all $t\in[0,1]$.

For $\sigma\in [1/2,1)$, use the path consisting of the line segment from
$(0,0)$ to $(x_\sigma,y_\sigma)$, the boundary arc
$y=f_\sigma(x)$ from $(x_\sigma,y_\sigma)$ to
$(1-y_\sigma,1-x_\sigma)$, and the reflected line segment to $(1,1)$. See Figure \ref{fig:monotone_path} for an illustration. The
tangency condition in \eqref{eq:tangent-point} ensures that the line segment
lies inside the entropy sublevel region; the arc lies on its boundary, and the
last segment follows by symmetry.  Hence this is a valid witness with
$h(Z,T)\le\sigma$.

It remains to compute $\II(Z;T) = \HH(Z)-\HH(Z \mid T)$. 
We have $\HH(Z)=1$, since $Z$ takes one of two values with equal probability.

Fix $t$, and assume that for an infinitesimal increment from $t$ to $t+dt$, $F_1(t)$ increases by $dx$ and $F_2(t)$ by $dy$. Then we have
\begin{align*}
    \Pbb(t\leq T \leq t+dt) 
    &= \frac{1}{2}\Pbb(t\leq T \leq t+dt \mid Z=1) + \frac{1}{2}\Pbb(t\leq T \leq t+dt \mid Z=2) \\
    &= \frac{1}{2}(dx+dy).
\end{align*}
By Bayes’ rule, 
\begin{align*}
    \Pbb(Z=1 \mid t\leq T \leq t+dt) &= \frac{dx}{dx+dy}.
\end{align*}
Hence, the contribution to $\HH(Z \mid T)$ of this infinitesimal piece of the path is 
\[
\Pbb(t\leq T \leq t+dt) \cdot \HH(Z \mid t\leq T \leq t+dt) 
= \frac{1}{2}(dx+dy)\cdot\hbin\left(\frac{dx}{dx+dy}\right).
\]
The boundary arc therefore contributes exactly $\frac{1}{2}I_\sigma$.  The first straight
segment contributes
\[
  \frac{1}{2}(x_\sigma+y_\sigma)
  \hbin\left(\frac{x_\sigma}{x_\sigma+y_\sigma}\right),
\]
and the reflected segment contributes the same amount.

Thus we have $\HH(Z \mid T) = a_\sigma$, and $\II(Z;T) = 1-a_\sigma$.
\end{proof}

For any given $\frac{1}{2} \leq \sigma < 1$ we can compute (lower bounds on) $I_\sigma$ and $a_\sigma$ numerically up to a specified precision using standard numerical integration methods, to get concrete lower bounds on $\Dent_\sigma$. The interval over which the integral is computed is discretized into many subintervals, and the contribution of each subinterval is lower bounded using the concavity of the binary entropy function. 

For example, we have the following.
\begin{lemma}\label{lem:a_sigma_lb}
    The following lower bounds hold.
    \begin{align*}
        &a_{0.5} \geq 0.3274327, \quad
        a_{0.6} \geq 0.5073334, \quad
        a_{0.7} \geq 0.6522079, \\
        &a_{0.8} \geq 0.7800100, \quad
        a_{0.9} \geq 0.8952229.
    \end{align*}
\end{lemma}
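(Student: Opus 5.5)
The plan is a rigorous, computer-assisted numerical evaluation of $a_\sigma$ for each listed $\sigma$, built so that every rounding or discretization error pushes the estimate downward. Write $a_\sigma = A_\sigma + \tfrac12 I_\sigma$, where $A_\sigma=(x_\sigma+y_\sigma)\hbin\bigl(x_\sigma/(x_\sigma+y_\sigma)\bigr)$ comes from the two straight segments. I would handle the two terms separately.

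\textbf{The segment term $A_\sigma$.} First solve the system \eqref{eq:tangent-point} for $(x_\sigma,y_\sigma)$ with interval Newton iteration, obtaining certified enclosures. Then evaluate $A_\sigma$ in interval arithmetic and keep its lower endpoint. For $\sigma=1/2$ the endpoint is explicit: $A_{1/2}=\tfrac12\hbin(1)=0$. In that case one must check that the arc starts at $(1/2,0)$ and follows $\hbin(x)+\hbin(y)=1$ up to $(1,1/2)$.

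\textbf{The arc term $I_\sigma$.} The key observation is that
\[
\frac12\int\Phi(f_\sigma'(x))\,dx=\frac12\int (dx+dy)\,\hbin\Bigl(\frac{dx}{dx+dy}\Bigr)
\]
is the integral of the perspective function $g(u,v)=(u+v)\hbin\bigl(u/(u+v)\bigr)$. This function is concave and $1$-homogeneous, hence superadditive: $g(u_1+u_2,v_1+v_2)\ge g(u_1,v_1)+g(u_2,v_2)$.

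Partition $[x_\sigma,1-y_\sigma]$ at points $x_0<\dots<x_N$ and set $y_i=f_\sigma(x_i)$. Superadditivity then gives
\[
I_\sigma\ \ge\ \sum_i g(x_{i+1}-x_i,\;y_{i+1}-y_i).
\]
This is exactly the chord (polygonal) lower bound. Each $y_i$ is computed as an interval by bisection on the monotone equation $\hbin(x_i)+\hbin(y)=2\sigma$, restricted to the lower branch $y<1/2$. Since $g$ is monotone in each argument on the relevant range, one could use the worst corners of the enclosures. It is cleaner to note that the chord inequality holds for any points on the curve, so the computed $y_i$ only need an error bound that is propagated through $g$ by interval arithmetic.

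Choosing $N$ large (say $10^5$ to $10^6$ points, refined near the endpoints where $f_\sigma'$ varies fastest) makes the chord sum converge to $I_\sigma$. The gap is second order in the mesh, so seven digits are reachable.

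\textbf{Main obstacle.} The hard part is the endpoint behaviour when $\sigma=1/2$. At $(1/2,0)$ the curve $\hbin(x)+\hbin(y)=1$ has a vertical tangent in $y$ versus $x$, because $\hbin'(0)=\infty$. Near that endpoint $f'$ blows up, and a uniform mesh loses accuracy.

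I would first try parametrizing by $y$ near that end, or equivalently grading the mesh geometrically toward the endpoint. The chord bound stays valid under any parametrization, since superadditivity is coordinate-free. Finally I would check that the tail contribution discarded near the endpoints is nonnegative. It is, because $g\ge0$, so simply omitting a tiny initial arc still yields a valid lower bound.

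Summing the certified lower bounds on $A_\sigma$ and $\tfrac12 I_\sigma$ should give the stated constants. In particular, $a_{0.5}=I/2\ge0.3274327$ yields $I\ge 0.654865$, consistent with Theorem~\ref{thm:closed-bound-all-S}.
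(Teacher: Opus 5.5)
There is a genuine gap: your central inequality $I_\sigma\ge\sum_i g(x_{i+1}-x_i,\,y_{i+1}-y_i)$ points the wrong way. Superadditivity of the concave, $1$-homogeneous $g$ says that the chord of a piece dominates the sum over its sub-pieces. So $g(\Delta x_i,\Delta y_i)\ge\int_{x_i}^{x_{i+1}}\Phi(f_\sigma'(x))\,dx$. Equivalently, Jensen for the concave $\Phi$ gives $\frac{1}{\Delta x_i}\int_{x_i}^{x_{i+1}}\Phi(f_\sigma'(x))\,dx\le\Phi(\Delta y_i/\Delta x_i)$. Polygonal sums are therefore \emph{upper} bounds on $I_\sigma$, and they decrease under refinement. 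For example, an L-shaped path has path integral $g(1,0)+g(0,1)=0$ but chord value $g(1,1)=2$. Concretely, for $\sigma=1/2$ your inequality with the single chord from $(1/2,0)$ to $(1,1/2)$ would give $I\ge g(1/2,1/2)=1$. That means $\Dent_{1/2}\ge-1/2$ and $2\sigma-\Dent_{1/2}\le1.5$, contradicting $\gamma>1.669845$ from Theorem~\ref{thm:gamma_lb}. Your computation would output numbers above the true $a_\sigma$. They would appear to confirm the stated constants while certifying nothing.

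What you need is a per-piece \emph{lower} bound; the paper only says each subinterval's contribution is lower-bounded using concavity of the binary entropy, and the following does this. The arc lies on the lower boundary of the convex superlevel set $\{\hbin(x)+\hbin(y)\ge2\sigma\}$, so $f_\sigma$ is convex. Hence on $[x_i,x_{i+1}]$ the slope stays in $[r_i,r_{i+1}]$, where $r_i=f_\sigma'(x_i)=-\hbin'(x_i)/\hbin'(y_i)$. On that interval the concave $\Phi$ lies above its secant $\ell_i$ through $(r_i,\Phi(r_i))$ and $(r_{i+1},\Phi(r_{i+1}))$. Since $\ell_i$ is affine, $\int_{x_i}^{x_{i+1}}\Phi(f_\sigma'(x))\,dx\ge\Delta x_i\,\ell_i(\Delta y_i/\Delta x_i)$. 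Equivalently, split the chord in the cone of the two endpoint tangent vectors and apply superadditivity in the correct direction. This bound is still second-order accurate. Enlarging the slope enclosures only lowers the secant, so interval errors are absorbed conservatively. The rest of your plan is fine: interval Newton for $(x_\sigma,y_\sigma)$, $A_{1/2}=0$, and discarding a tiny nonnegative end piece near a singular endpoint. One small slip: at $(1/2,0)$ the tangent is horizontal ($f_\sigma'=0$), and the vertical tangent is at $(1,1/2)$. This is harmless by the symmetry $(x,y)\mapsto(1-y,1-x)$.
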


In particular, for $\sigma = \frac{1}{2}$ we get
\begin{lemma}
    \[2a_\frac{1}{2}=I=I_\frac{1}{2} \geq 0.654865\]
    Thus,
    \[\Dent_\frac{1}{2} \geq a_\frac{1}{2} -1 \geq -0.672568.\]
\end{lemma}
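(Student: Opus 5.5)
The plan is to certify the numerical bound $I_{1/2}\ge 0.654865$ by replacing the integral with a finite sum of chord contributions that provably underestimates it. The second claim then follows from Theorem~\ref{thm:two-state-corridor-close} at $\sigma=1/2$. There $(x_{1/2},y_{1/2})=(1/2,0)$, so the straight-segment term $(x+y)\hbin(x/(x+y))$ equals $\tfrac12\hbin(1)\cdot 0$-type degenerate, i.e.\ $\tfrac12\hbin(1)=\tfrac12$ times a zero-width contribution. More precisely, the segment from $(0,0)$ to $(1/2,0)$ has $dy=0$ and contributes $0$. Hence $a_{1/2}=I/2\ge 0.3274325$ and $\Dent_{1/2}\ge a_{1/2}-1\ge -0.672568$.

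The key structural fact is the form of the integrand. Write $g(a,b)=(a+b)\hbin\bigl(a/(a+b)\bigr)=a\log_2\frac{a+b}{a}+b\log_2\frac{a+b}{b}$ for $a,b\ge 0$. Then $\Phi(f'(x))\,dx=g(dx,\,f'(x)dx)$. The function $g$ is positively $1$-homogeneous and concave on $\R_{\ge0}^2$, since it is the perspective of the concave function $p\mapsto\hbin(p)$. Hence $g$ is superadditive: $g(a_1+a_2,b_1+b_2)\ge g(a_1,b_1)+g(a_2,b_2)$.

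Apply this to the arc. Take a partition $1/2=x_0<x_1<\dots<x_N=1$ with $y_i=f_{1/2}(x_i)$, where $f_{1/2}$ is defined by $\hbin(y)=1-\hbin(x)$ and increases from $0$ to $1/2$. By Jensen/superadditivity on each piece,
\[
\int_{x_{i-1}}^{x_i}\Phi(f'(x))\,dx\ \ge\ g(x_i-x_{i-1},\,y_i-y_{i-1}),
\]
and summing gives $I\ge\sum_i g(\Delta x_i,\Delta y_i)$. Moreover $g$ is nondecreasing in each argument, since $\partial_a g=\log_2\frac{a+b}{a}\ge0$. So it suffices to use lower bounds on the $\Delta y_i$.

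For the numerics, I will choose rational $x_i$ and compute rigorous enclosures $y_i\in[y_i^-,y_i^+]$ by interval bisection on the monotone equation $\hbin(y)=1-\hbin(x_i)$ on $[0,1/2]$. I then use $\Delta y_i\ge y_i^- - y_{i-1}^+$, evaluate $g$ with outward-rounded interval arithmetic, and sum.

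The main obstacle is that the mesh must be fine enough for the sum to clear $0.654865$. The curve is singular at both ends: near $x=1/2$ the slope $f'$ is $0$ to high order, and near $x=1$ it blows up. A uniform mesh loses the most accuracy there. I would first try a graded (e.g.\ geometrically refined) mesh concentrated near both endpoints, with a few thousand to tens of thousands of points, and check convergence of the chord sums. If the margin is thin, I would refine adaptively where consecutive chord sums differ most.
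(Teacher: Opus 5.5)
Your reduction $a_{1/2}=I/2$ is right: the segment from $(0,0)$ to $(1/2,0)$ has $dy=0$ and contributes nothing. Note, though, that $\hbin(1)=0$, not $1$ as your aside says. The final arithmetic $0.3274325-1\ge-0.672568$ and the appeal to Theorem~\ref{thm:two-state-corridor-close} are also fine.

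\textbf{The gap: your key inequality points the wrong way.} The superadditivity you correctly derive says that splitting an increment into pieces can only \emph{decrease} the total. On each subinterval this gives
\[
\int_{x_{i-1}}^{x_i}\Phi(f'(x))\,dx\;\le\;g(\Delta x_i,\Delta y_i).
\]
Equivalently, Jensen for the concave $\Phi$ gives $\frac1{\Delta x_i}\int\Phi(f')\,dx\le\Phi(\Delta y_i/\Delta x_i)$. So $\sum_i g(\Delta x_i,\Delta y_i)$ is an \emph{upper} bound on $I$. A two-step path shows the sign: $(0,0)\to(1,0)\to(1,1)$ has value $g(1,0)+g(0,1)=0$, while its chord has value $g(1,1)=2$.

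There are two other ways to see the reversal. Information-theoretically, the chord sum is $2\HH(Z\mid A)$ with $A$ the bin of $T$, and coarsening $T$ can only increase the conditional entropy of $Z$. Geometrically, $\{\hbin(x)+\hbin(y)\ge1\}$ is convex, so your inscribed chords leave the feasible region. However fine the mesh, your certificate therefore cannot establish $I\ge0.654865$.

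\textbf{How to repair it.} Lower-bounding a piece needs slope information, not just endpoint values. The set $\{\hbin(x)+\hbin(y)\ge1\}\cap([\tfrac12,1]\times[0,\tfrac12])$ is exactly the region above the lower branch, so $f=f_{1/2}$ is convex. Hence on $[x_{i-1},x_i]$ the slope lies in $[r_i^-,r_i^+]=[f'(x_{i-1}),f'(x_i)]$, where $f'(x)=\frac{\log_2(x/(1-x))}{\log_2((1-f(x))/f(x))}$.

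Concavity of $\Phi$ puts $\Phi$ above its secant on $[r_i^-,r_i^+]$, and integrating gives
\[
\int_{x_{i-1}}^{x_i}\Phi(f')\,dx\;\ge\;\Delta x_i\,\Phi(r_i^-)+\frac{\Phi(r_i^+)-\Phi(r_i^-)}{r_i^+-r_i^-}\bigl(\Delta y_i-r_i^-\Delta x_i\bigr).
\]
This equals the value of the circumscribed polygon of tangent segments. That polygon is a genuinely valid path, since supporting lines of a convex set avoid its interior. This is how the paper's remark about bounding each subinterval via concavity of $\hbin$ has to be read.

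The bound remains valid if you replace $r_i^-$ and $r_i^+$ by outer enclosures, and since the secant slope is nonnegative it is monotone in $\Delta y_i$. Your interval arithmetic must therefore also enclose the slopes, not just the $y_i$.

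Near $x=1$, where $r_i^+=\infty$, use the cruder bound $\Delta x_i\Phi(r_i^-)$, which holds because $\Phi$ is increasing. Alternatively, use the symmetry $(x,y)\mapsto(1-y,1-x)$ of the arc and integrate only over the half where $f'\le1$, then double.
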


Optimizing $2\sigma-(a_\sigma-1)$ we get that this value is minimized roughly at $\sigma \approx 0.515617$, where we get the following.
\begin{theorem}\label{thm:gamma_ub}
    Let $\sigma^* = 0.515617$. We have
    \[a_{\sigma^*} \geq 0.361376.\]
    Thus,
    \[\Dent_{\sigma^*} \geq a_{\sigma^*} - 1 \geq -0.638624,\]
    and
    \begin{align*}
        \inf_{0<\sigma\leq 1}\{2\sigma-\Dent_\sigma\} \leq 2\sigma^*-\Dent_{\sigma^*} \leq  1.669858
    \end{align*}
    In terms of $S$ and $D_S$, this is
    \[
      \inf_{1<S \leq 2}{\frac{S^2}{D_S}} \leq 2^{1.669858} < 3.18184.
    \]
\end{theorem}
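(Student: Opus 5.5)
The statement has two parts. The inequality $\Dent_{\sigma^*}\ge a_{\sigma^*}-1$ is Theorem~\ref{thm:two-state-corridor-close} applied at $\sigma^*\in[1/2,1)$. The final bound on $S^2/D_S$ follows from Theorem~\ref{thm:comb_equals_entropy}. So the real work is the rigorous numerical certificate $a_{\sigma^*}\ge 0.361376$. Once that is in hand, $2\sigma^*-\Dent_{\sigma^*}\le 1.031234+0.638624=1.669858$. This quantity upper bounds the infimum, which equals $\inf_S\log_2(S^2/D_S)$, and exponentiating gives $2^{1.669858}<3.18184$; this last step is a direct numerical check.

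To certify $a_{\sigma^*}$, I would first pin down $(x_{\sigma^*},y_{\sigma^*})$ by solving the two equations in \eqref{eq:tangent-point}. Only approximate values are needed.

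Exactness of the path is not needed, because any monotone path from $(0,0)$ to $(1,1)$ inside the sublevel region $\hbin(x)+\hbin(y)\le 2\sigma^*$ gives a valid witness. The proof of Theorem~\ref{thm:two-state-corridor-close} then yields $\HH(Z\mid T)$ as the sum over pieces of $\tfrac12(\Delta x+\Delta y)\hbin(\Delta x/(\Delta x+\Delta y))$ when the pieces are straight segments. Concretely, I would pick rational points $(x_0,y_0),\dots,(x_m,y_m)$ along the curve. I would start at the origin, take a straight segment to a point near $(x_{\sigma^*},y_{\sigma^*})$, then polygonal points along $y=f_{\sigma^*}(x)$, and end with the symmetric reflection.

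Each chord must stay inside the region. Since $\hbin(x)+\hbin(y)$ is concave, the sublevel set is not convex, so chords of the boundary arc actually leave the region. To fix this, I would place the polygon vertices strictly inside the region and check each chord. One way is to take the vertices on the level curve for a slightly smaller value $2\sigma^*-\varepsilon$, with a staircase or chord argument showing the chords remain below $2\sigma^*$. Another is to use axis-parallel staircase steps, each of which costs zero conditional entropy.

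A simpler route is to keep the exact arc and lower bound the integral $I_{\sigma^*}$ directly. On each subinterval the integrand $\Phi(f')$ involves a perspective function of the concave $\hbin$, so $\int\Phi(f'(x))\,dx\ge(\Delta x+\Delta y)\hbin(\Delta x/(\Delta x+\Delta y))$ would be wrong: Jensen goes the other way. I would therefore prefer the inner polygonal path, which is itself a valid witness and makes the chord sum exactly $\HH(Z\mid T)$ for that witness. This also sidesteps needing $a_{\sigma^*}$ itself, since the claim really only needs a witness achieving the bound.

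The main obstacle is this rigorous discretization. The vertices must be rational and checked with interval arithmetic, both to confirm feasibility ($\hbin(x)+\hbin(y)\le 2\sigma^*$ along each segment, checked via the maximum of a concave function on a segment) and to bound the resulting finite sum from below with enough precision to reach $0.361376$. Near the endpoints, where $f'$ varies fastest, I would use a finer mesh.
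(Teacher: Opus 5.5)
\textbf{The gap: the stated bound on $a_{\sigma^*}$ is never proved.} Your downstream chain is fine: Theorem~\ref{thm:two-state-corridor-close} at $\sigma^*$, the arithmetic $1.031234+0.638624=1.669858$, the transfer via Theorem~\ref{thm:comb_equals_entropy}, and $2^{1.669858}<3.18184$. The problem is the first assertion of the statement, $a_{\sigma^*}\ge 0.361376$. You call this the real work and then explicitly do not prove it.

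Your inner polygon is a \emph{different} two-state witness. Its chord sum certifies $\Dent_{\sigma^*}\ge-0.638624$ directly. But $a_{\sigma^*}$ is the value of the specific tangent--arc--tangent path $\Gamma$, and nothing in your argument shows the polygon's value is at most $a_{\sigma^*}$.

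To get the stated claim from your route, you would need a comparison lemma: among monotone paths in $\{\hbin(x)+\hbin(y)\le2\sigma^*\}$ passing on the same side of the obstacle, $\Gamma$ maximizes $\int(dx+dy)\hbin(dx/(dx+dy))$. This appears to be provable, for instance by replacing a path with its lower convex envelope and then cutting along the edge lines of polygons inscribed in $\Gamma$. Each step can only increase the functional, by superadditivity of $(u,v)\mapsto(u+v)\hbin(u/(u+v))$. It is, however, an extra argument you have not supplied.

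Separately, a pure staircase along the arc cannot reach the target. Axis-parallel steps contribute zero, so it discards $\tfrac12 I_{\sigma^*}$, which is most of $a_{\sigma^*}$.

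\textbf{The direct route works, and it is the paper's route.} You are right that the chord value is an \emph{upper} bound on a subinterval's contribution. But a pointwise lower bound works instead. The paper discretizes $[x_{\sigma^*},1-y_{\sigma^*}]$ and lower-bounds each subinterval's contribution to $I_{\sigma^*}$ directly, invoking concavity.

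One concrete way to do this:
\begin{itemize}
\item $\Phi(r)=(1+r)\log_2(1+r)-r\log_2 r$ satisfies $\Phi'(r)=\log_2(1+1/r)>0$.
\item $f_\sigma$ is convex, since it is the lower boundary of the convex superlevel set $\{\hbin(x)+\hbin(y)\ge2\sigma\}$ of a concave function.
\item Hence $x\mapsto\Phi(f'_\sigma(x))$ is nondecreasing.
\item So a left Riemann sum is a valid lower bound on $I_{\sigma^*}$, using rigorous lower enclosures of $f'_\sigma(x_j)=-\hbin'(x_j)/\hbin'(f_\sigma(x_j))$.
\end{itemize}
Adding the closed-form segment term certifies $a_{\sigma^*}\ge0.361376$ itself. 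No feasibility checks are needed, since validity of $\Gamma$ is already part of Theorem~\ref{thm:two-state-corridor-close}. Everything after that follows as you wrote.
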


\section{Closed-form lower bound on $\Dent_\sigma$, for $0<\sigma\leq 1$}\label{sec5}

\begin{figure}
    \centering
    \includegraphics[width=0.8\linewidth]{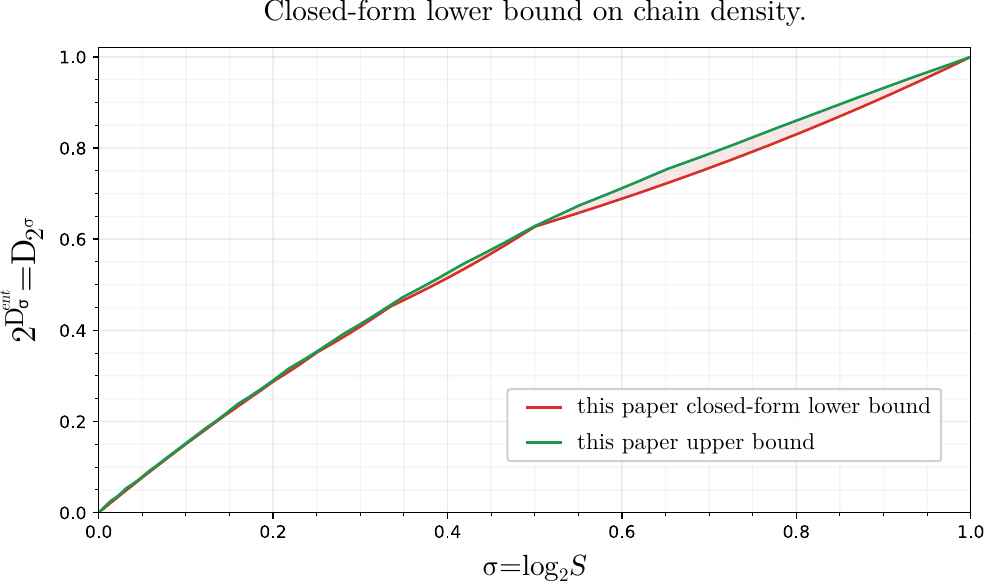}
    \caption{Closed-form lower bound curve for chain density $D_S$ (Theorem~\ref{thm:closed-bound-all-S}) and obtained upper bound curve.}
    \label{fig:density_closed_lb}
\end{figure}

In this section we adapt the $\sigma=\frac{1}{2}$ witness from the previous section into a
closed-form bound for every value of $0<\sigma\leq 1$, yielding Theorem~\ref{thm:closed-bound-all-S}. Recall that the preceding section gives
\[
  \Dent_{1/2}\geq \frac{1}{2}I-1.
\]
The idea is to use the same two-state transition repeatedly to get witnesses at
$\sigma=1/k$, and then interpolate between these points using concavity of
$\Dent_\sigma$.

\restateTCBAS*

Note that although this bound is not tight in general, it has the advantage of having a simple expression, while not being too far from the computed upper bounds (as illustrated in Figure \ref{fig:density_closed_lb}).

We first prove the discrete estimates at the points $\sigma=1/k$.

\begin{lemma}\label{lem:gridpoint-Dent-bound}
For every integer $k\ge1$,
\begin{equation*}
  \Dent_{1/k}\ge -\log_2 k+\frac{k-1}{k}I.
\end{equation*}
\end{lemma}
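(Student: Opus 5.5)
The plan is to build an explicit $(Z,T)$ witness with $Z$ uniform on $[k]$, by chaining $k-1$ copies of the two-state boundary arc used for $\sigma=1/2$ in Theorem~\ref{thm:two-state-corridor-close}. For $k=1$ the statement is trivial: take $Z$ constant, so that $\II(Z;T)=0$ and $h(Z,T)=\sup_t \hbin(t)=1$. So assume $k\ge2$.

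\textbf{Construction.} Recall the $\sigma=1/2$ path, where $(x_{1/2},y_{1/2})=(1/2,0)$. It runs along the segment from $(0,0)$ to $(1/2,0)$, then along the arc $\hbin(x)+\hbin(y)=1$ from $(1/2,0)$ to $(1,1/2)$, then along the segment from $(1,1/2)$ to $(1,1)$.

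I would specify the conditional CDFs $F_1,\dots,F_k$ of $T$ given $Z=i$ through a monotone path $\Gamma(t)=(F_1(t),\dots,F_k(t))$ from $0$ to $\mathbf 1$, parametrized so that $\frac1k\sum_i F_i(t)=t$. This parametrization makes $T\sim\Unif[0,1]$. The path proceeds in phases:
\begin{enumerate}
\item First, $F_1$ alone goes from $0$ to $1/2$.
\item Then, for $i=1,\dots,k-1$ in turn, the pair $(F_i,F_{i+1})$ traverses the arc from $(1/2,0)$ to $(1,1/2)$. This moves $F_i$ from $1/2$ to $1$ and $F_{i+1}$ from $0$ to $1/2$, while all other coordinates sit at $0$ or $1$.
\item Finally, $F_k$ alone goes from $1/2$ to $1$.
\end{enumerate}
The end state of each phase is exactly the start state of the next, so the path is well defined and monotone.

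\textbf{Threshold entropy.} As in Theorem~\ref{thm:two-state-corridor-close}, we have $\HH(\1_{\{T\le t\}}\mid Z)=\frac1k\sum_i\hbin(F_i(t))$. Coordinates equal to $0$ or $1$ contribute nothing. In a solo phase the remaining term is at most $1$. In an arc phase the two active terms sum to exactly $1$. Hence $h(Z,T)\le 1/k$.

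\textbf{Mutual information.} We have $\II(Z;T)=\HH(Z)-\HH(Z\mid T)=\log_2 k-\HH(Z\mid T)$. I would compute $\HH(Z\mid T)$ with the same infinitesimal Bayes argument as in Theorem~\ref{thm:two-state-corridor-close}, now with weights $1/k$.
\begin{itemize}
\item During a solo phase, $Z$ is determined by $T$, so the contribution is $0$.
\item During the arc phase of pair $(i,i+1)$, only $i$ and $i+1$ carry mass, with increments $dx,dy$. The contribution is $\frac1k(dx+dy)\hbin\bigl(\frac{dx}{dx+dy}\bigr)$. Integrated over the arc this gives $\frac1k I$, since $\Phi(f'(x))\,dx$ is exactly this integrand.
\end{itemize}
Summing over the $k-1$ arcs gives $\HH(Z\mid T)=\frac{k-1}{k}I$. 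Therefore $\II(Z;T)=\log_2 k-\frac{k-1}kI$, which yields $\Dent_{1/k}\ge-\log_2k+\frac{k-1}{k}I$.

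\textbf{Main obstacle.} The only delicate point is making the phase-by-phase decomposition of $\HH(Z\mid T)$ rigorous. I would handle it by noting that $T$ determines the phase, because the phases occupy disjoint $t$-intervals. Within an arc phase, $Z$ conditioned on $T$ is supported on two states, so the conditional law reduces exactly to the two-state computation already done in Theorem~\ref{thm:two-state-corridor-close}, rescaled by the factor $1/k$.
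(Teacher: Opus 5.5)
Your proposal is correct and follows the paper's proof essentially step for step. Both take $Z$ uniform on $[k]$ and move $F_1$ alone to $1/2$. Then $k-1$ copies of the $\sigma=1/2$ arc are run on consecutive coordinate pairs, followed by a final solo move of $F_k$. This gives $h(Z,T)\le 1/k$ and $\HH(Z\mid T)=\frac{k-1}{k}I$. Your observation that $T$ determines the phase is a slightly more explicit justification of the phase-by-phase decomposition than the paper's infinitesimal Bayes argument, but the argument is the same.
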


\begin{proof}
The case $k=1$ is witnessed by taking $Z$ to be constant, so assume
$k\ge2$.  Take $Z$ uniform on $[k]=\{1,\ldots,k\}$.  We specify the
conditional CDFs
\[
  F_i(t)=\Pbb(T\le t\mid Z=i),\qquad i=1,\ldots,k,
\]
by a monotone path
\[
  \Gamma(t)=(F_1(t),\ldots,F_k(t))
\]
from $(0,\ldots,0)$ to $(1,\ldots,1)$ satisfying
\[
  \frac1k\sum_{i=1}^k F_i(t)=t.
\]
This last identity guarantees that the unconditional distribution of $T$ is
uniform on $[0,1]$.

The path is as follows.  First move the first coordinate from $0$ to $1/2$.
Then, for $j=1,\ldots,k-1$, move only the pair of coordinates
$(F_j,F_{j+1})$ from $(1/2,0)$ to
$(1,1/2)$ along the arc $\hbin(F_j)+\hbin(F_{j+1}) = 1$. Finally move the last coordinate from $1/2$ to $1$.  All other
coordinates remain fixed during each step. See Figure \ref{fig:monotone_path_3D} for an illustration.

At every point of this path, the sum of the coordinate entropies is at most
$1$. Therefore, for every threshold $t$,
\[
  \HH(\1_{\{T\le t\}}\mid Z)
  =\frac1k\sum_{i=1}^k \hbin(F_i(t))
  \le \frac1k.
\]
So this is a valid witness for $\Dent_{1/k}$.

It remains to compute its mutual information.  We use
\[
  \II(Z;T)=\HH(Z)-\HH(Z\mid T).
\]
Since $Z$ is uniform on $[k]$, we have $\HH(Z)=\log_2 k$.

Consider an infinitesimal piece of the CDF path.  Suppose the coordinates
increase by
\[
  (dx_1,\ldots,dx_k),
\]
where only some of these increments may be nonzero.  Conditional on $Z=i$, the
probability that $T$ falls in this infinitesimal piece is $dx_i$.  Since the
states of $Z$ are equally likely, the unconditional probability of $T$ falling in this piece is
\[
  \frac1k\sum_i dx_i.
\]
Given that $T$ lies in this piece, Bayes' rule says that the posterior
probability of $Z=i$ is
\[
  \frac{dx_i}{\sum_r dx_r}.
\]
Thus the contribution of this piece to $\HH(Z\mid T)$ is
\[
  \frac1k\left(\sum_i dx_i\right)
  \HH\left(\frac{dx_1}{\sum_r dx_r},\ldots,
            \frac{dx_k}{\sum_r dx_r}\right).
\]
On a straight piece where only one coordinate moves, this contribution is $0$.
On each two-coordinate endpoint arc, the total contribution is exactly $I/k$,
because the corresponding two-coordinate path entropy is the integral $I$.
There are $k-1$ such arcs.  Hence
\[
  \HH(Z\mid T)=\frac{k-1}{k}I.
\]
Consequently
\[
  \II(Z;T)=\log_2 k-\frac{k-1}{k}I.
\]
Since $\Dent_{1/k}$ is the negative of the least possible mutual information
among witnesses with $h(Z,T)\le1/k$, this gives
\[
  \Dent_{1/k}\ge
  -\log_2 k+\frac{k-1}{k}I,
\]
as claimed.
\end{proof}

\begin{figure}
    \centering
    \includegraphics[width=0.5\linewidth]{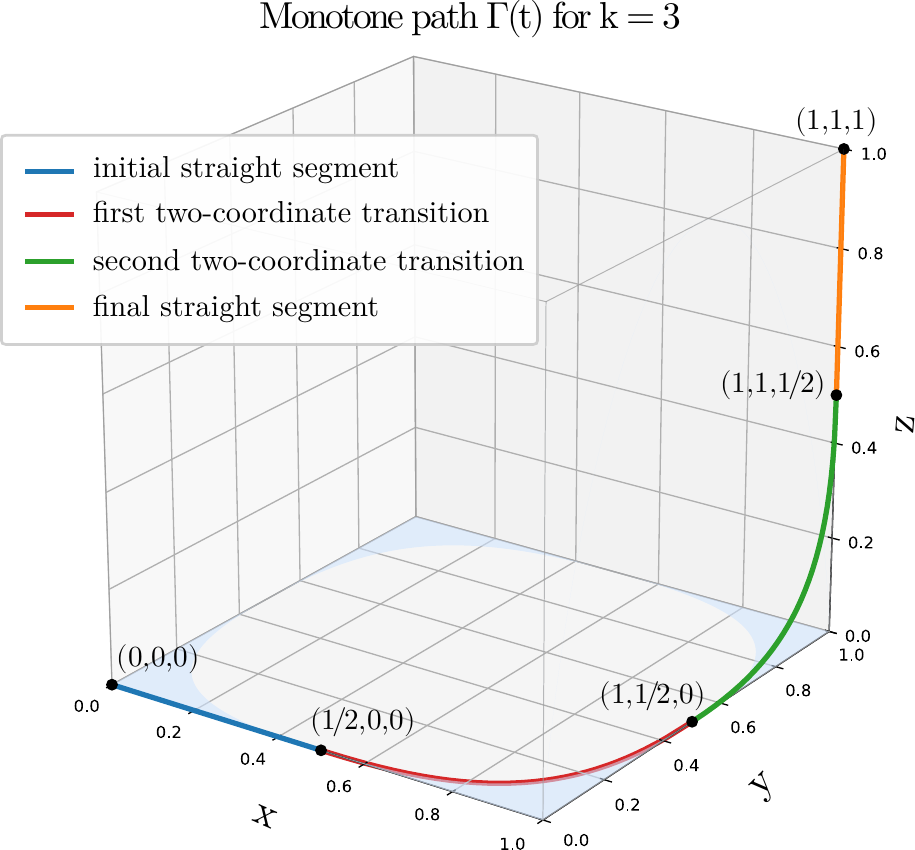}
    \caption{Illustration of the monotone path $\Gamma(t)=(F_1(t),F_2(t),F_3(t))$ used in the proof of Theorem \ref{thm:closed-bound-all-S} (in the $k=3$ case). The path stays inside the region $\hbin(x)+\hbin(y)+\hbin(z)\leq 1$ (whose intersection with the lower faces of the cube is shown in light blue) at all times. It is composed of four pieces: an initial straight segment, two distinct two-coordinate transitions staying on a face of the cube, and a final straight segment. The two coordinate transitions trace the same path on their respective faces as the pieces of the curve in Figure \ref{fig:monotone_path} between $(\frac{1}{2},0)$ and $(1,\frac{1}{2})$ in the $\sigma=\frac{1}{2}$ case.}
    \label{fig:monotone_path_3D}
\end{figure}

We also need the following basic properties of $\Dent_\sigma$.

\begin{lemma}\label{lem:Dent-concave}
The function $\sigma\mapsto\Dent_\sigma$ is finite, continuous, nondecreasing,
and concave on $(0,1]$, with $\Dent_1=0$.
\end{lemma}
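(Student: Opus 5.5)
We prove the four properties directly from the definition of $\Dent_\sigma$ as $-\inf\II(Z;T)$ over admissible pairs with $h(Z,T)\le\sigma$. Concavity is the key step; it then drives the argument for continuity.

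\emph{Value at $1$, monotonicity, finiteness.} For $\sigma=1$, take $Z$ constant. Then $h(Z,T)=\sup_t\hbin(t)=1\le 1$ and $\II(Z;T)=0$. Since mutual information is nonnegative, $\Dent_\sigma\le 0$ for every $\sigma$, and therefore $\Dent_1=0$. Monotonicity is immediate: enlarging $\sigma$ enlarges the feasible set of the infimum. For finiteness we need one feasible pair with finite mutual information for each $\sigma>0$. Lemma~\ref{lem:gridpoint-Dent-bound} supplies this at $\sigma=1/k$, giving $\Dent_{1/k}\ge-\log_2 k+\frac{k-1}{k}I>-\infty$. Monotonicity then extends finiteness to all $\sigma\ge 1/k$, and hence to all of $(0,1]$, by taking $k$ large.

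\emph{Concavity (main step).} Let $(Z_0,T_0)$ and $(Z_1,T_1)$ be witnesses for $\sigma_0$ and $\sigma_1$, and let $\lambda\in[0,1]$. Take $B\sim\mathrm{Bernoulli}(\lambda)$ independent of both pairs, set $T=T_B$ and $Z=(B,Z_B)$. Then $T$ is uniform, because each $T_i$ is uniform and $B$ is independent.

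For each threshold $t$ we have
\[
\HH(\1_{\{T\le t\}}\mid Z)=(1-\lambda)\HH(\1_{\{T_0\le t\}}\mid Z_0)+\lambda\HH(\1_{\{T_1\le t\}}\mid Z_1)\le(1-\lambda)\sigma_0+\lambda\sigma_1 .
\]
This uses the fact that both pairs are compared at the \emph{same} $t$, with each conditional entropy bounded by its own supremum. So $h(Z,T)\le(1-\lambda)\sigma_0+\lambda\sigma_1$.

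For the information, $B$ is independent of $T$ because $T$ is uniform given $B$. The chain rule from the preliminaries therefore gives $\II((B,Z_B);T)=\II(Z_B;T\mid B)=(1-\lambda)\II(Z_0;T_0)+\lambda\II(Z_1;T_1)$. Taking infima over both witnesses yields $\Dent_{(1-\lambda)\sigma_0+\lambda\sigma_1}\ge(1-\lambda)\Dent_{\sigma_0}+\lambda\Dent_{\sigma_1}$. The only care needed is in measure-theoretic details, for example allowing $Z$ to be general and handling the conditional entropy with $B$ folded into $Z$.

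\emph{Continuity.} A finite concave function on an interval is continuous on its interior, so $\Dent$ is continuous on $(0,1)$. At the endpoint $\sigma=1$, concavity gives $\liminf_{\sigma\to1}\Dent_\sigma\ge\Dent_1$ by interpolating with any fixed point $\sigma_0<1$. Monotonicity gives $\Dent_\sigma\le\Dent_1$. Together these give left-continuity at $1$. I expect no real obstacle beyond making sure the mixture construction respects the supremum over $t$ in $h$, which it does pointwise as noted above.
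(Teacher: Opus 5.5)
Your proof is correct and follows the paper's argument. It uses the same independent Bernoulli mixture $W=(B,Z_B)$ for concavity; the paper couples the two witnesses through a common $T$, which is equivalent in law to your $T=T_B$. It also uses the same constant witness for $\Dent_1=0$, and the same ``continuous on the interior, then concavity plus monotonicity at $\sigma=1$'' argument.

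The only difference is the witness you use for finiteness. The paper takes $Z$ to be the index of the $K$-bin containing $T$, which gives $h(Z,T)\le 1/K$ and $\II(Z;T)=\log_2 K$. You instead invoke Lemma~\ref{lem:gridpoint-Dent-bound}. That is also valid, since that lemma is proved independently of this one, but it relies on a heavier construction than is needed here.
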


\begin{proof}
The function $\Dent$ is nondecreasing in $\sigma$, because increasing $\sigma$
relaxes the constraint $h(Z,T)\le\sigma$.

To prove concavity, let $0<\sigma_1,\sigma_2\le1$ and $0\le\lambda\le1$.
Choose admissible pairs $(Z_1,T)$ and $(Z_2,T)$ with
$h(Z_i,T)\le\sigma_i$.  We may use the same uniform variable $T$ for both
witnesses by describing each witness through the conditional law of $Z_i$ given
$T=t$.

Let $B$ be independent of $T$, with
\[
  \Pbb(B=1)=\lambda,
  \qquad
  \Pbb(B=2)=1-\lambda.
\]
Conditional on $B=i$ and $T=t$, sample $Z_i$ according to the conditional law
of the $i$th witness.  Set $W=(B,Z_B)$.  Then
\[
  \II(W;T)=\lambda\II(Z_1;T)+(1-\lambda)\II(Z_2;T),
\]
and, for every threshold $t$,
\[
  \HH(\1_{\{T\le t\}}\mid W)
  =\lambda\HH(\1_{\{T\le t\}}\mid Z_1)
   +(1-\lambda)\HH(\1_{\{T\le t\}}\mid Z_2).
\]
Therefore
\[
  h(W,T)\le\lambda\sigma_1+(1-\lambda)\sigma_2.
\]
Taking infima over the two witnesses shows that $-\Dent_\sigma$ is convex, and thus $\Dent_\sigma$ is concave.

For finiteness, fix $\sigma>0$ and choose an integer $K$ with $1/K\le\sigma$.
Divide $[0,1]$ into $K$ equal bins
\[
  B_j=\left(\frac{j-1}{K},\frac jK\right],\qquad j=1,\ldots,K,
\]
and let $Z$ be the bin containing $T$.  Given $Z$, the threshold bit
$\1_{\{T\le t\}}$ is uncertain only when $t$ lies in the same bin as $T$;
this occurs with probability $1/K$.  Hence
\[
  h(Z,T)\le \frac1K\le\sigma.
\]
Also $\II(Z;T)=\HH(Z)=\log_2K<\infty$.  Thus $\Dent_\sigma> -\infty$.
Since mutual information is always nonnegative, we also have
$\Dent_\sigma\le0$, so $\Dent_\sigma$ is finite.

Finally, $\Dent_1=0$: the constant witness gives $\Dent_1\ge0$, while
nonnegativity of mutual information gives $\Dent_1\le0$.  A finite concave
function is continuous on the interior of its domain, and continuity at $1$
follows from concavity together with the value $\Dent_1=0$.
\end{proof}

\begin{proof}[Proof of \cref{thm:closed-bound-all-S}]
If $\sigma = 1$, both sides of the claimed inequality are $0$.  So assume
$0<\sigma<1$.

With $k=\lfloor1/\sigma\rfloor$, we have
\[
  \frac1{k+1}<\sigma\le\frac1k.
\]
The number
\[
  \lambda=k(k+1)\sigma-k
\]
is exactly the coefficient for writing $\sigma$ as a convex combination of the
neighboring grid points:
\[
  \sigma=\lambda\frac1k+(1-\lambda)\frac1{k+1}.
\]
By concavity of $\Dent$ and \cref{lem:gridpoint-Dent-bound},
\begin{align*}
  \Dent_\sigma
  &\ge
  \lambda\Dent_{1/k}+(1-\lambda)\Dent_{1/(k+1)} \\
  &\ge
  \lambda\left(-\log_2 k+\frac{k-1}{k}I\right)
  +(1-\lambda)\left(-\log_2(k+1)+\frac{k}{k+1}I\right) \\
  &=-\lambda\log_2 k-(1-\lambda)\log_2(k+1)
    +I\left(\lambda\frac{k-1}{k}+(1-\lambda)\frac{k}{k+1}\right).
\end{align*}
The coefficient of $I$ simplifies to
\[
  \lambda\frac{k-1}{k}+(1-\lambda)\frac{k}{k+1}
  =1-\left(\lambda\frac1k+(1-\lambda)\frac1{k+1}\right)
  =1-\sigma.
\]
This proves the theorem.
\end{proof}

\section{Computing a lower bound on $2\sigma-\Dent_\sigma$, and upper bounds on $\Dent_\sigma$}\label{sec6}
In this section we explain how the lower bound on
\[
  \gamma
  :=
  \inf_{0<\sigma\le1}\{2\sigma-\Dent_\sigma\}
  =
  \inf_{(Z,T)}\{\II(Z;T)+2h(Z,T)\}
\]
is computed.  We note that this entails the lower bound of Theorem~\ref{thm:prec} (lower bound on $\gamma$), again by using the equivalence Theorem~\ref{thm:comb_equals_entropy} and the algorithmic framework.

The right-hand side is an optimization over all admissible
pairs $(Z,T)$, and is therefore infinite-dimensional.  The key point is that
the finite discretization below approximates the continuous problem with an
explicit error bound.

Fix an integer $K\ge2$, let $Y\sim\Unif([K])$, and define
\begin{equation*}
  \mathsf B_K(Z,Y)
  =
  \II(Z;Y)
  +2\max_{1\le k<K}\HH(\1_{\{Y\le k\}}\mid Z).
\end{equation*}
Let
\begin{equation*}
  \gamma_K
  =
  \inf_{Y\sim\Unif([K])}\mathsf B_K(Z,Y),
\end{equation*}
where the infimum is over all couplings of the auxiliary variable $Z$ with the
uniform variable $Y$.

\begin{lemma}\label{lem:finite-dual-transfer}
For every $K\ge2$,
\begin{equation*}
  \gamma_K\le \gamma\le \gamma_K+2\hbin(1/K).
\end{equation*}
\end{lemma}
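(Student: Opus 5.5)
We prove the two inequalities separately, each by mapping witnesses of one problem to witnesses of the other.

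\emph{Lower bound $\gamma_K\le\gamma$.} Take any admissible $(Z,T)$ and discretize $T$ into $K$ bins:
\[
Y=\lceil KT\rceil \quad (\text{with } Y=1 \text{ on } T=0).
\]
Then $Y\sim\Unif([K])$, and $\{Y\le k\}=\{T\le k/K\}$ up to a null set. Hence
\[
\max_{1\le k<K}\HH(\1_{\{Y\le k\}}\mid Z)\le \sup_t \HH(\1_{\{T\le t\}}\mid Z)=h(Z,T).
\]
Since $Y$ is a function of $T$, the data processing inequality gives $\II(Z;Y)\le \II(Z;T)$. Therefore $\mathsf B_K(Z,Y)\le \II(Z;T)+2h(Z,T)$. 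Taking the infimum over $(Z,T)$ and using the last expression in Theorem~\ref{thm:comb_equals_entropy} yields $\gamma_K\le\gamma$.

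\emph{Upper bound $\gamma\le\gamma_K+2\hbin(1/K)$.} Start from a coupling $(Z,Y)$ with $Y\sim\Unif([K])$. Let $R\sim\Unif[0,1]$ be independent of $(Z,Y)$ and set
\[
T=(Y-1+R)/K,
\]
so $T\sim\Unif[0,1]$ and $(Z,T)$ is admissible. Because $Y$ is a function of $T$ and $R$ is independent of $(Z,Y)$, the pair $T$ is equivalent to $(Y,R)$ up to a bijection. By the consequence of the chain rule stated in the preliminaries, $\II(Z;T)=\II(Z;(Y,R))=\II(Z;Y)$.

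For the threshold term, fix $t$ and let $k$ be the index with $t\in[(k-1)/K,k/K]$. Then
\[
\1_{\{T\le t\}}=\1_{\{Y\le k-1\}}+\1_{\{Y=k\}}\1_{\{R\le Kt-k+1\}}.
\]
So $\1_{\{T\le t\}}$ is a function of the two bits $\1_{\{Y\le k-1\}}$ and $E=\1_{\{Y=k,\,R\le s\}}$, with $s=Kt-k+1$. Subadditivity of conditional entropy gives
\[
\HH(\1_{\{T\le t\}}\mid Z)\le \HH(\1_{\{Y\le k-1\}}\mid Z)+\HH(E\mid Z).
\]
The first term is at most $\max_{1\le k'<K}\HH(\1_{\{Y\le k'\}}\mid Z)$; when $k-1\in\{0,K\}$ it is $0$. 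For the second term, $\HH(E\mid Z)\le \HH(E)=\hbin(s/K)$ with $s/K\le 1/K\le 1/2$, so $\HH(E)\le \hbin(1/K)$ by monotonicity of $\hbin$ on $[0,1/2]$.

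This gives $h(Z,T)\le \max_k \HH(\1_{\{Y\le k\}}\mid Z)+\hbin(1/K)$, and hence
\[
\II(Z;T)+2h(Z,T)\le \mathsf B_K(Z,Y)+2\hbin(1/K).
\]
Taking the infimum over couplings $(Z,Y)$ gives the upper bound.

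\emph{Main obstacle.} The only delicate step is the within-bin threshold: $\1_{\{T\le t\}}$ must be decomposed into a grid threshold plus an event of probability at most $1/K$. Using $\hbin(1/K)$ rather than $\hbin(s/K)$ requires $K\ge2$, which the statement assumes. One should also check that mutual information with a continuous $T$ is handled correctly, via the bijection $T\leftrightarrow(Y,R)$ and the chain rule.
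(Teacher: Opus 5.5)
Your proof is correct and matches the paper's argument step for step. For the lower bound you discretize $T$ into $K$ bins and apply data processing. For the upper bound you spread each bin uniformly via $T=(Y-1+R)/K$ and bound the within-bin threshold by an extra bit of entropy at most $\hbin(1/K)$; your bit $E=\1_{\{Y=k,\,R\le s\}}$ is exactly the paper's error bit $A\oplus W$.
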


\begin{proof}
We first prove the lower bound $\gamma\ge \gamma_K$.  Let $(Z,T)$ be any
admissible pair, and discretize $T$ into $K$ equal bins:
\[
  Y=\min(K,\lceil KT\rceil),
\]
with the convention that $Y=1$ when $T=0$.  Since $T$ is uniform on
$[0,1]$, the variable $Y$ is uniform on $[K]$.

The variable $Y$ is a function of $T$, so by data processing,
\[
  \II(Z;Y)\le \II(Z;T).
\]
Also, for each $1\le k<K$,
\[
  \1_{\{Y\le k\}}=\1_{\{T\le k/K\}},
\]
and therefore
\[
  \HH(\1_{\{Y\le k\}}\mid Z)
  \le
  \sup_{0\le t\le1}\HH(\1_{\{T\le t\}}\mid Z)
  =h(Z,T).
\]
Thus
\[
  \mathsf B_K(Z,Y)
  \le
  \II(Z;T)+2h(Z,T).
\]
Since $\gamma_K$ is the infimum of $\mathsf B_K$ over all finite $K$-bin witnesses,
$\mathsf B_K(Z,Y)\ge \gamma_K$.  Hence
\[
  \II(Z;T)+2h(Z,T)\ge \gamma_K.
\]
Taking the infimum over all admissible $(Z,T)$ gives $\gamma\ge \gamma_K$.

We now prove the upper bound $\gamma\le \gamma_K+2\hbin(1/K)$.  Let $(Z,Y)$ be a
witness for the finite version with $Y\sim\Unif([K])$.  Turn it into a continuous
witness by spreading each bin uniformly: let $R\sim\Unif[0,1]$ be independent
of $(Z,Y)$ and put
\[
  T=\frac{Y-1+R}{K}.
\]
Then $T\sim\Unif[0,1]$.  Also, $T$ almost surely determines $Y$, and the
extra within-bin randomness $R$ is independent of $Z$ once $Y$ is known.  Hence
\[
  \II(Z;T)=\II(Z;Y).
\]
It remains to compare threshold entropies.  Fix a threshold $t\in[0,1]$, and
let $\ell$ be the bin containing $t$.  Put
\[
  A=\1_{\{Y\le \ell-1\}},
  \qquad
  W=\1_{\{T\le t\}}.
\]
The bits $A$ and $W$ can differ only when $Y=\ell$, an event of probability
$1/K$.  Therefore
\[
  \Pbb(A\ne W)\le 1/K.
\]
Since $W$ is determined by $A$ together with the error bit $A\oplus W$,
\[
\begin{aligned}
  \HH(W\mid Z)
  &\le \HH(A\mid Z)+\HH(A\oplus W) \\
  &\le \HH(A\mid Z)+\hbin(1/K).
\end{aligned}
\]
Here we used that $\hbin$ is increasing on $[0,1/2]$ and $K\ge2$.  Taking the
supremum over $t$ gives
\[
  h(Z,T)
  \le
  \max_{1\le k<K}\HH(\1_{\{Y\le k\}}\mid Z)
  +\hbin(1/K).
\]
Thus the continuous witness obtained from $(Z,Y)$ has objective at most
\[
  \II(Z;Y)
  +2\max_{1\le k<K}\HH(\1_{\{Y\le k\}}\mid Z)
  +2\hbin(1/K)
  =\mathsf B_K(Z,Y)+2\hbin(1/K).
\]
Taking the infimum over finite witnesses $(Z,Y)$ proves
$\gamma\le \gamma_K+2\hbin(1/K)$.
\end{proof}

We now describe the finite certificate used to lower bound $\gamma_K$.  Let
\begin{equation*}
  \triangle_K=\bigl\{r=(r_1,\ldots,r_K)\in[0,1]^K:
      \sum_{j=1}^K r_j=1\bigr\}.
\end{equation*}
For $r\in\triangle_K$, let
\begin{equation*}
  C_k(r)=\sum_{j=1}^k r_j,
  \qquad 1\le k<K.
\end{equation*}
We use the convention $0\log_2 0=0$, so that
$r_j\log_2(Kr_j)$ is interpreted as $0$ when $r_j=0$.

\begin{proposition}
\label{prop:finite-dual-simplex-certificate}
Suppose there exist numbers $\lambda_1,\ldots,\lambda_{K-1}\ge0$, a vector
$a=(a_1,\ldots,a_K)\in\R^K$, and a number $E$ such that
\begin{equation}\label{eq:finite-dual-load}
  \sum_{k=1}^{K-1}\lambda_k\le2
\end{equation}
and, for every $r\in\triangle_K$,
\begin{equation}\label{eq:finite-dual-simplex-cert}
  \sum_{j=1}^K r_j\log_2(Kr_j)
  +\sum_{k=1}^{K-1}\lambda_k\hbin(C_k(r))
  \ge
  E+\sum_{j=1}^K a_j(r_j-1/K).
\end{equation}
Then $\gamma_K\ge E$.  Consequently $\gamma\ge E$.
\end{proposition}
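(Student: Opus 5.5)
The plan is to fix an arbitrary finite witness $(Z,Y)$ with $Y\sim\Unif([K])$ and show $\mathsf B_K(Z,Y)\ge E$. The last assertion, $\gamma\ge E$, then follows immediately from \cref{lem:finite-dual-transfer}. For each value $z$ of $Z$, let $r(z)\in\triangle_K$ be the posterior, $r_j(z)=\Pbb(Y=j\mid Z=z)$. (If $Z$ is not discrete, I would work with a regular conditional distribution and integrate instead of summing; this changes nothing.) Then $C_k(r(z))=\Pbb(Y\le k\mid Z=z)$, so
\[
  \HH(\1_{\{Y\le k\}}\mid Z)=\Ebb_Z\bigl[\hbin(C_k(r(Z)))\bigr].
\]
Since $Y$ is uniform, $\HH(Y)=\log_2 K$ and $\HH(Y\mid Z=z)=-\sum_j r_j(z)\log_2 r_j(z)$. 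Hence
\[
  \II(Z;Y)=\Ebb_Z\Bigl[\sum_j r_j(Z)\log_2(Kr_j(Z))\Bigr],
\]
which is the expected KL divergence of the posterior from uniform.

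The next step replaces the maximum in $\mathsf B_K$ by the weighted sum. Since $\lambda_k\ge0$ and $\sum_k\lambda_k\le2$, and the entropies are nonnegative,
\[
  2\max_k \HH(\1_{\{Y\le k\}}\mid Z)\;\ge\;\sum_k\lambda_k\,\HH(\1_{\{Y\le k\}}\mid Z).
\]
This is exactly why \eqref{eq:finite-dual-load} is phrased with $\le 2$. It gives
\[
  \mathsf B_K(Z,Y)\ge \Ebb_Z\Bigl[\sum_j r_j(Z)\log_2(Kr_j(Z))+\sum_k\lambda_k\hbin(C_k(r(Z)))\Bigr].
\]

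Finally, apply \eqref{eq:finite-dual-simplex-cert} pointwise at $r=r(Z)$ and take expectations. The right-hand side becomes $E+\sum_j a_j(\Ebb[r_j(Z)]-1/K)$. By the tower property and uniformity of $Y$, $\Ebb[r_j(Z)]=\Pbb(Y=j)=1/K$, so the linear term vanishes and $\mathsf B_K(Z,Y)\ge E$. The $a_j$ thus act as Lagrange multipliers for the marginal constraint on $Y$. Taking the infimum over couplings gives $\gamma_K\ge E$.

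There is no real obstacle here; the argument is routine weak duality. The only points needing care are measure-theoretic: the conditioning for general $Z$, and integrability, which holds because the integrands are bounded (the KL term is at most $\log_2 K$ and the entropies are at most $1$). One should also check that the $0\log_2 0$ convention matches the posterior formula for $\II(Z;Y)$.
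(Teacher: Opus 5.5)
Your proposal is correct and follows essentially the same argument as the paper. Both pass to the posterior $r(Z)$ and identify the averaged certificate terms with $\II(Z;Y)$ and $\HH(\1_{\{Y\le k\}}\mid Z)$. Both make the affine term vanish via $\Ebb[r_j(Z)]=1/K$, and both bound $\sum_k\lambda_k\HH(\1_{\{Y\le k\}}\mid Z)$ above by $2\max_k\HH(\1_{\{Y\le k\}}\mid Z)$. The only difference is that you replace the max by the weighted sum before averaging the certificate rather than after, which is immaterial.
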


\begin{proof}
Let $(Z,Y)$ be an arbitrary coupling with $Y\sim\Unif([K])$.  Given $Z$, let
\begin{equation*}
  r_j(Z)=\Pbb(Y=j\mid Z),
  \qquad j=1,\ldots,K.
\end{equation*}
This is a random point of $\triangle_K$.  Since $Y$ is uniform on $[K]$,
\begin{equation*}
  \Ebb\left[r_j(Z)\right]=1/K
  \qquad\text{for every }j.
\end{equation*}
Average \eqref{eq:finite-dual-simplex-cert} over $Z$.  The affine term on the r.h.s.\ vanishes, and we obtain
\begin{equation*}
  \Ebb\left[\sum_{j=1}^K r_j(Z)\log_2(Kr_j(Z))\right]
  +\sum_{k=1}^{K-1}\lambda_k
     \Ebb\left[\hbin(C_k(r(Z)))\right]
  \ge E.
\end{equation*}
The first term is the mutual information, written explicitly as
\begin{equation*}
  \II(Z;Y)
  =\HH(Y)-\HH(Y\mid Z)
  =\Ebb\left[\sum_{j=1}^K r_j(Z)\log_2(Kr_j(Z))\right].
\end{equation*}
Also,
\begin{equation*}
  C_k(r(Z))=\Pbb(Y\le k\mid Z),
\end{equation*}
so
\begin{equation*}
  \Ebb\left[\hbin(C_k(r(Z)))\right]
  =\HH(\1_{\{Y\le k\}}\mid Z).
\end{equation*}
Thus
\begin{equation*}
  \II(Z;Y)
  +\sum_{k=1}^{K-1}\lambda_k
    \HH(\1_{\{Y\le k\}}\mid Z)
  \ge E.
\end{equation*}
By \eqref{eq:finite-dual-load},
\begin{equation*}
  \sum_{k=1}^{K-1}\lambda_k
    \HH(\1_{\{Y\le k\}}\mid Z)
  \le
  2\max_{1\le k<K}\HH(\1_{\{Y\le k\}}\mid Z).
\end{equation*}
Therefore $\mathsf B_K(Z,Y)\ge E$.  Since $(Z,Y)$ was arbitrary, $\gamma_K\ge E$.
The last statement follows from \cref{lem:finite-dual-transfer}.
\end{proof}

One can show via a duality argument that for every $\epsilon$, there is a $(\lambda, a)$ certificate with value at least $\gamma_K-\epsilon$. We will however not use this here, and thus leave out the proof.

Given a candidate certificate, it remains to verify \eqref{eq:finite-dual-simplex-cert}.  Although this is
an inequality on the $(K-1)$-dimensional simplex, its structure allows its verification to be reduced to a one-dimensional dynamic program.  Set
\begin{equation*}
  s_0=0,
  \qquad
  s_j=C_j(r)\quad(1\le j<K),
  \qquad
  s_K=1.
\end{equation*}
Then $r_j=s_j-s_{j-1}$ and
\begin{equation*}
  0=s_0\le s_1\le\cdots\le s_K=1.
\end{equation*}
Define
\begin{equation*}
  \ell_j(x)=x\log_2(Kx)-a_jx,
  \qquad 0\le x\le1.
\end{equation*}
After substituting $r_j=s_j-s_{j-1}$, the certificate inequality is equivalent
to
\begin{equation*}
  \sum_{j=1}^K \ell_j(s_j-s_{j-1})
  +\sum_{k=1}^{K-1}\lambda_k\hbin(s_k)
  +\frac1K\sum_{j=1}^K a_j
  -E
  \ge0
\end{equation*}
for every nondecreasing sequence $(s_j)_{j=0}^K$ with $s_0=0$ and $s_K=1$.

Now define
\begin{equation*}
  V_0(0)=0,
  \qquad
  V_0(x)=+\infty\quad(x\ne0),
\end{equation*}
and, for $1\le j<K$,
\begin{equation*}
  V_j(x)=\lambda_j\hbin(x)+\inf_{0\le y\le x}
  \{V_{j-1}(y)+\ell_j(x-y)\}.
\end{equation*}
The final step is
\begin{equation*}
  V_K(1)=\inf_{0\le y\le1}\{V_{K-1}(y)+\ell_K(1-y)\}.
\end{equation*}
Unwinding the recursion gives
\begin{align*}
&\inf_{r\in\triangle_K}
\biggl[
  \sum_{j=1}^K r_j\log_2(Kr_j)
  +\sum_{k=1}^{K-1}\lambda_k\hbin(C_k(r))
  -\sum_{j=1}^K a_j(r_j-1/K)
\biggr] \\
&\hspace{4cm}
=V_K(1)+\frac1K\sum_{j=1}^K a_j.
\end{align*}
Thus, if $V_K(1)+\frac1K\sum_{j=1}^K a_j$ is at least
$E$, then \eqref{eq:finite-dual-simplex-cert} holds and
\cref{prop:finite-dual-simplex-certificate} proves $\gamma\ge E$.

The numerical computation thus has two stages.  First, we search for good
coefficients $(\lambda,a)$ to maximize  $V_K(1)+\frac1K\sum_{j=1}^K a_j$. In theory, a certificate can be computed yielding the exact infimum up to some arbitrarily chosen precision. However, the computational cost is prohibitive, so we resort to looking for $(\lambda,a)$ heuristically. 

In the second stage (the actual proof), the dynamic program is verified, pessimistically rounding intermediate values to ensure the final bound is valid. To run the dynamic program, the continuous variable $x\in [0,1]$ is discretized into $M$ discrete bins (for some large value of $M$) and the worst-case values inside the bins are bounded using the values at the endpoints and the concavity/convexity of the functions involved. Larger values of $M$ result in more computation, but smaller loss in this process and thus a better final bound. 

The certificate we use for $K=400$ and $M=300\ 000$ gives the following.

\begin{theorem}\label{thm:gamma_lb}
    \begin{equation*}
   \inf_{0<\sigma\le1}\{2\sigma-\Dent_\sigma\} = \gamma> E_{400} > 1.669845.
    \end{equation*}
    Equivalently,
    \begin{equation*}
      \inf_{1<S\le2}\frac{S^2}{D_S}> 2^{1.669845} > 3.1818.
    \end{equation*}
\end{theorem}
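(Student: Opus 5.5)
The proof will combine \cref{prop:finite-dual-simplex-certificate} with the one-dimensional dynamic program described just above, and finish with Theorem~\ref{thm:comb_equals_entropy}. The equality $\inf_{0<\sigma\le1}\{2\sigma-\Dent_\sigma\}=\gamma$ is the last identity in Theorem~\ref{thm:comb_equals_entropy}: for each $\sigma$ we minimise $2\sigma+\II(Z;T)$ over witnesses with $h(Z,T)\le\sigma$, and then take the infimum over $\sigma$, which collapses to $\inf_{(Z,T)}\{2h(Z,T)+\II(Z;T)\}$. So the content of the theorem is the numerical inequality $\gamma>1.669845$. The ``equivalently'' clause then follows from $\log_2(S^2/D_S)=2\sigma-D^\ell_\sigma$ with $D^\ell_\sigma=\Dent_\sigma$, together with $2^{1.669845}>3.1818$. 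That last check is elementary: $\log_2 3.1818<1.669845$, as a direct evaluation confirms.

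For the lower bound we fix $K=400$ and exhibit explicit coefficients $\lambda_1,\dots,\lambda_{399}\ge0$ with $\sum_k\lambda_k\le2$, a vector $a\in\R^{400}$, and a value $E_{400}>1.669845$. These are found offline by heuristic optimisation, for instance alternating between optimising $a$ for fixed $\lambda$ and taking subgradient steps in $\lambda$ on the dual of $\gamma_K$. Any choice is admissible as long as the verification below succeeds. The load condition \eqref{eq:finite-dual-load} is then checked directly, in exact or rounded-down arithmetic.

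The substantive step is certifying \eqref{eq:finite-dual-simplex-cert} for all $r\in\triangle_K$. By the unwinding identity stated above, it suffices to show $V_K(1)+\frac1K\sum_j a_j\ge E_{400}$. We discretise $x\in[0,1]$ into $M=300\,000$ cells and compute, for each $j$ and each cell, a rigorous lower bound on $V_j$ over the whole cell. This requires lower-bounding $\lambda_j\hbin(x)$ on a cell, where concavity means the minimum is attained at an endpoint. It also requires lower-bounding $\inf_{y\le x}\{V_{j-1}(y)+\ell_j(x-y)\}$ when $x$ and $y$ range over cells. Here $\ell_j$ is convex, so its infimum over an interval of differences $x-y$ is either at the stationary point $2^{a_j}/(eK)$, if that point lies in the interval, or at an endpoint. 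The monotonicity of the cell structure gives the range of $x-y$ for each pair of cells. All floating-point operations are rounded downward, for example with interval arithmetic. By induction, the computed table is a pointwise lower bound for the true $V_j$, so the final computed number is a lower bound for $V_K(1)+\frac1K\sum_j a_j$.

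The main obstacle is this verification. The naive DP over pairs of cells costs $O(KM^2)$, which is infeasible at $M=3\cdot10^5$. I would first exploit convexity of $\ell_j$ to restrict the inner infimum. Alternatively, I would use a running-minimum or lower-envelope technique: bound $\ell_j(x-y)$ below by its tangent lines, or use a coarse-to-fine pruning of the $y$ range, to bring the cost to roughly $O(KM\log M)$. Care is needed to keep the discretisation loss, which is of order the modulus of continuity of $x\log x$ at scale $1/M$, below the slack between $E_{400}$ and the heuristic optimum. Once the verified value exceeds $1.669845$, \cref{prop:finite-dual-simplex-certificate} gives $\gamma_{400}\ge E_{400}$, and \cref{lem:finite-dual-transfer} gives $\gamma\ge\gamma_{400}$, which concludes the proof.
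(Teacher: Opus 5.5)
Your proposal is correct and follows essentially the same route as the paper. You take a heuristically found $(\lambda,a)$ certificate for $K=400$ in \cref{prop:finite-dual-simplex-certificate}, verify it through the one-dimensional dynamic program discretized into $M=300\,000$ cells with pessimistic rounding based on the concavity of $\hbin$ and the convexity of $\ell_j$, and then transfer the bound to $\gamma$ via \cref{lem:finite-dual-transfer} and to $S^2/D_S$ via Theorem~\ref{thm:comb_equals_entropy}; your remarks on avoiding the naive $O(KM^2)$ cost go beyond what the paper spells out but do not change the argument.
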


The finite-continuous comparison in \cref{lem:finite-dual-transfer} also shows
that the finite computation converges to the continuous optimum: indeed
$\gamma_K\le \gamma\le \gamma_K+2\hbin(1/K)$, and the error term tends to zero. In theory, the described approach can thus be used to compute a bound which is tight up to an arbitrarily chosen precision.
However, the coefficients $(\lambda,a)$ are expensive to optimize exactly, and this explicit error term is a worst-case bound and decreases rather
slowly. For the upper side of the final numerical interval it is therefore
better to use the explicit witnesses from the previous section.  Since
\cref{thm:two-state-corridor-close} gives $\Dent_\sigma\ge a_\sigma-1$ for
$1/2\le\sigma\le1$, we have
\begin{equation*}
  \gamma
  =\inf_{0<\sigma\le1}\{2\sigma-\Dent_\sigma\}
  \le
  \inf_{1/2\le\sigma\le1}\{2\sigma-a_\sigma+1\}.
\end{equation*}

The right-hand side is at most
\begin{equation*}
  1.669858,
  \qquad
  2^{1.669858}<3.1818328,
\end{equation*}
attained near $\sigma=0.515617$.  We conjecture that the two quantities
coincide exactly, that is,
\begin{equation*}
  \inf_{0<\sigma\le1}\{2\sigma-\Dent_\sigma\}
  =\inf_{1/2\le\sigma\le1}\{2\sigma-a_\sigma+1\}.
\end{equation*}

\subparagraph*{Computing upper bounds for $\Dent_\sigma$}

The lower bound of $2\sigma-\Dent_\sigma > 1.669845$ from the previous theorem immediately gives the following upper bound on $\Dent_\sigma$.
\[\Dent_\sigma \leq 2\sigma-1.669845.\]
This is essentially tight near $\sigma \approx 0.515617$. However, this bound gets worse the further we are from this value of $\sigma$. 

The lower bound on $1/D_S$ given in \cite{DK} (where this quantity is denoted as $P_S$), when translated to our setting using Theorem~\ref{thm:comb_equals_entropy}, gives the following.
\begin{proposition}[\cite{DK}]
For every $0<\sigma\leq 1$, and every integer $k \geq 0$,
\[
    \Dent_\sigma \leq k\sigma-\log_2(k+1).
\]
\end{proposition}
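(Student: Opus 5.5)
The plan is to prove the bound directly in the entropy formulation, without going back to set systems; Theorem~\ref{thm:comb_equals_entropy} is only needed to read it as the \cite{DK} bound on $D_S$. Since $\Dent_\sigma$ is minus the infimum of $\II(Z;T)$, it suffices to show the following: every admissible pair $(Z,T)$ with $h(Z,T)\le\sigma$ satisfies
\[
  \II(Z;T)\ \ge\ \log_2(k+1)-k\sigma .
\]
The case $k=0$ is immediate, because the right-hand side is $0$ and mutual information is nonnegative. So I fix $k\ge1$.

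The first step is to discretize $T$ into $k+1$ equal bins. I set thresholds $t_j=j/(k+1)$ for $j=1,\ldots,k$ and let $Y\in[k+1]$ be the index of the bin containing $T$, say $Y=1+\sum_{j=1}^k (1-\1_{\{T\le t_j\}})$. Since $T\sim\Unif[0,1]$, the variable $Y$ is uniform on $[k+1]$, so $\HH(Y)=\log_2(k+1)$. Because $Y$ is a function of $T$, the data processing inequality gives
\[
  \II(Z;T)\ \ge\ \II(Z;Y)\ =\ \HH(Y)-\HH(Y\mid Z).
\]

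The second step bounds $\HH(Y\mid Z)$. Write $W_j=\1_{\{T\le t_j\}}$. Then $Y$ is a function of the vector $(W_1,\ldots,W_k)$, so by the chain rule and the fact that conditioning does not increase entropy,
\[
  \HH(Y\mid Z)\le \HH(W_1,\ldots,W_k\mid Z)\le\sum_{j=1}^k \HH(W_j\mid Z)\le k\, h(Z,T)\le k\sigma .
\]
Here each term $\HH(W_j\mid Z)$ is bounded by the supremum in the definition of $h(Z,T)$. Combining the two steps gives $\II(Z;T)\ge\log_2(k+1)-k\sigma$. Taking the infimum over admissible witnesses then yields $\Dent_\sigma\le k\sigma-\log_2(k+1)$.

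There is no real obstacle here. The only points needing care are that $Y$ is exactly uniform, which holds because the boundary events have probability zero, and that data processing is applied on the $T$-side, as allowed by its symmetric form stated in the preliminaries. An alternative route is to import the counting argument of \cite{DK} for set families and transfer it through Theorem~\ref{thm:comb_equals_entropy}. The direct argument above avoids that and makes the origin of the bound transparent: $k$ thresholds each leak at most $\sigma$ bits of residual uncertainty about a $(k+1)$-valued uniform label.
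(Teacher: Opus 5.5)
Your proof is correct, and it takes a different route from the paper. The paper gives no argument of its own here. It takes the combinatorial bound $D^{\ell}_\sigma\le k\sigma-\log_2(k+1)$ of \cite{DK}, stated there as a lower bound on $P_S=1/D_S$, and transfers it to $\Dent_\sigma$ through \cref{thm:comb_equals_entropy}. That transfer needs the hard direction $\Dent_\sigma\le D^{\ell}_\sigma$: the block-profile construction of \cref{prop:block-profile-approx}, the finite-valued reduction of \cref{lem:finite-valued-reduction}, and the continuity of $\Dent$ from \cref{lem:Dent-concave}.

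Your argument needs none of that machinery; it uses only data processing and subadditivity. It is the entropy analogue of a layer-counting proof on set families. Your $k$ thresholds play the role of the levels $jn/(k+1)$, each offering at most $|\F|$ choices. The equality $\HH(Y)=\log_2(k+1)$ plays the role of $n!\big/\bigl((n/(k+1))!\bigr)^{k+1}\approx(k+1)^n$. Combined with only the easy direction, \cref{lem:set-to-entropy-witness}, your argument re-derives the \cite{DK} bound on $D_S$ rather than relying on it.

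Your proof is also exactly the simplest certificate in the paper's own dual framework of \cref{sec6}. Your discretization step is the first half of \cref{lem:finite-dual-transfer}. Taking $K=k+1$, $\lambda_j=1$ and $a=0$ reduces \eqref{eq:finite-dual-simplex-cert} to $\sum_{j<K}\hbin(C_j(r))\ge\HH(r)$, which gives $E=\log_2(k+1)$ at load $\Lambda=k$. The computed values $\mu_\Lambda$ in \cref{prop:dent_ub} (e.g.\ $\mu_2=1.6698>\log_2 3$, $\mu_3=2.1276>2$) improve on this trivial certificate by spreading the load over many more thresholds.
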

This gives better bounds when $\sigma$ is near $0$ or $1$, but is overall very loose.

One way to get better upper bounds of this nature is to essentially look for lower bounds on $\Lambda\sigma-\Dent_\sigma$, for different values of $\Lambda > 0$. This can be done by trivial modifications of the method outlined above (in short, replace the constraint \eqref{eq:finite-dual-load} with $\sum_{k=1}^{K-1}\lambda_k\leq \Lambda$). A lower bound of $\mu_\Lambda$ on this value then gives
\[\Dent_\sigma \leq \Lambda\sigma-\mu_\Lambda.\]
Choosing $\Lambda$ smaller makes the resulting bound better for larger values of $\sigma$, while choosing a larger value for $\Lambda$ favors good bounds for smaller $\sigma$.

Below are some recorded bounds for some different values of $\Lambda$, computed with $K=100$ and $M=50\ 000$ (except for the case $\Lambda=2$ which was computed earlier).\footnote{The certificates, as well as the code used to verify them, will be made available on the author's GitHub repository.}

\begin{proposition}\label{prop:dent_ub}
    We have $\Dent_\sigma \leq \Lambda\sigma-\mu_\Lambda$, for each of the following $(\Lambda,\mu_\Lambda)$ pairs:
    \begin{align*}
        & (\Lambda=1.1, \mu_\Lambda=1.0925), &
        & (\Lambda=1.2, \mu_\Lambda=1.1778), &
        & (\Lambda=1.3, \mu_\Lambda=1.2554), &
        & (\Lambda=1.6, \mu_\Lambda=1.4507), \\
        & (\Lambda=2,\mu_\Lambda=1.6698), &
        & (\Lambda=2.5,\mu_\Lambda=1.9189), &
        & (\Lambda=3,\mu_\Lambda=2.1276), &
        & (\Lambda=4,\mu_\Lambda=2.4737), \\
        & (\Lambda=5,\mu_\Lambda=2.7515), &
        & (\Lambda=7,\mu_\Lambda=3.1855), &
        & (\Lambda=10,\mu_\Lambda=3.6605), &
        & (\Lambda=13,\mu_\Lambda=4.0160), \\
        & (\Lambda=16,\mu_\Lambda=4.3001), &
        & (\Lambda=20,\mu_\Lambda=4.6072), &
        & (\Lambda=30,\mu_\Lambda=5.1659), &
        & (\Lambda=60,\mu_\Lambda=6.0930).
    \end{align*}
\end{proposition}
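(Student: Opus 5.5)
Each pair $(\Lambda,\mu_\Lambda)$ in the list is handled by the same certificate scheme as \cref{prop:finite-dual-simplex-certificate}, with only the load constraint changed. Fix $\Lambda>0$ and define
\[
\gamma^{(\Lambda)}=\inf_{0<\sigma\le1}\{\Lambda\sigma-\Dent_\sigma\}=\inf_{(Z,T)}\{\II(Z;T)+\Lambda h(Z,T)\}.
\]
The second equality holds exactly as in \cref{thm:comb_equals_entropy}: take the witness with $h(Z,T)=\sigma$, or else relax $\sigma$ down to $h(Z,T)$. Once $\gamma^{(\Lambda)}\ge\mu_\Lambda$ is known, we get $\Lambda\sigma-\Dent_\sigma\ge\mu_\Lambda$ for every $\sigma$, which is the claimed bound $\Dent_\sigma\le\Lambda\sigma-\mu_\Lambda$.

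\textbf{Discretization.} First I redo the discretization half of \cref{lem:finite-dual-transfer} with the factor $2$ replaced by $\Lambda$. Set $Y=\min(K,\lceil KT\rceil)$. Data processing gives $\II(Z;Y)\le\II(Z;T)$. Also $\HH(\1_{\{Y\le k\}}\mid Z)=\HH(\1_{\{T\le k/K\}}\mid Z)\le h(Z,T)$. Hence
\[
\gamma^{(\Lambda)}\ge\gamma_K^{(\Lambda)}=\inf_{Y\sim\Unif([K])}\Bigl\{\II(Z;Y)+\Lambda\max_{k<K}\HH(\1_{\{Y\le k\}}\mid Z)\Bigr\}.
\]

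\textbf{Certificate.} Next I reuse the proof of \cref{prop:finite-dual-simplex-certificate} verbatim, replacing \eqref{eq:finite-dual-load} by $\sum_k\lambda_k\le\Lambda$. Suppose $(\lambda,a,E)$ satisfies the simplex inequality \eqref{eq:finite-dual-simplex-cert}. Averaging it over the posterior $r(Z)$ kills the affine term and gives $\II(Z;Y)+\sum_k\lambda_k\HH(\1_{\{Y\le k\}}\mid Z)\ge E$. The left-hand side is at most the $\Lambda$-objective, because $\sum_k\lambda_k\le\Lambda$. Therefore $\gamma^{(\Lambda)}\ge E$. For $\Lambda=2$ this is just \cref{thm:gamma_lb}, since $1.669845>1.6698$.

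\textbf{Verification.} For each remaining $\Lambda$ I will exhibit explicit certificates with $K=100$, obtained by a heuristic search. To verify the simplex inequality, I substitute the prefix sums $s_j$ and run the one-dimensional dynamic program $V_j$ described above, checking that $V_K(1)+\frac1K\sum_j a_j\ge\mu_\Lambda$. The program is run on a grid of $M=50\,000$ bins in $x$. Within each bin I bound the terms rigorously from below: $\hbin$ is concave, so its minimum over a bin is attained at an endpoint; $\ell_j$ is convex, so on a bin I use a tangent line or the endpoint values. All floating-point operations are rounded pessimistically.

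\textbf{Main obstacle.} The hard part is computational. It consists of finding $(\lambda,a)$ good enough that the rigorous discretized value still exceeds each stated $\mu_\Lambda$, and of controlling the losses in the dynamic program for large $\Lambda$, where $\lambda$ is concentrated near the ends and the relevant $r$ sit close to the boundary of the simplex. I would initialize the search from the primal optimum, namely the complementary slackness structure of the best witness at the $\sigma$ where the supporting line of slope $\Lambda$ touches $\Dent$.
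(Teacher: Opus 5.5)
Your proposal is correct and follows the paper's route: you replace \eqref{eq:finite-dual-load} by $\sum_k\lambda_k\le\Lambda$, reuse the one-sided discretization of \cref{lem:finite-dual-transfer} and the averaging argument of \cref{prop:finite-dual-simplex-certificate}, verify heuristically found certificates ($K=100$, $M=50\,000$) with the pessimistically rounded dynamic program, and take the $\Lambda=2$ case from \cref{thm:gamma_lb}. As in the paper, each listed pair ultimately rests on the computed certificates themselves, which neither you nor the paper display.
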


\section{An application to fibres in the Boolean lattice: Proof of Theorem~\ref{thm:fibre}}\label{sec8}
Here we give another consequence of our bounds on $\Dent_\sigma$, beyond the algorithmic applications to TSP and other permutation problems. This proves Theorem~\ref{thm:fibre}.

Let $B_n = 2^{[n]}$ denote the $n$-dimensional Boolean lattice, with the sets composing them being compared by inclusion. Recall that an antichain of $B_n$ is a subfamily of $B_n$ consisting of sets which are pairwise incomparable. An antichain is maximal if it cannot be extended to a larger antichain by adding a set. A fibre of $B_n$ is a family $\mathcal A\subseteq B_n$ that intersects every maximal antichain. 

We let $\mathrm{MF}_n$ denote the size of a smallest fibre in $B_n$. It is easy to show that $\mathrm{MF}_n \leq O(2^{n/2})$: consider the family of all sets comparable to any given set of the middle layer of $B_n$; this is a fibre and has size $O(2^{n/2})$. 

Lonc and Rival~\cite{LoncRival1987} conjectured that this is tight. Duffus, Sands and Winkler~\cite{DuffusSW90} made some progress towards that conjecture, by showing $\mathrm{MF}_n \geq \Omega(1.25^{n})$. This was later improved by {\L}uczak to $\mathrm{MF}_n \geq \Omega(2^{n/3})$, as reported by Duffus and Sands~\cite{DuffusSands2001}. There has been no further progress since.

We exploit our bounds on $\Dent_\sigma$ to improve on this lower bound, showing the following.

\restateFIBRE*

For the proof, we need the following additional fact from \cite{DuffusSW90}: every fibre of $B_n$ contains at least $n!/2^{n-1}$ maximal chains.

\begin{proof}
    Fix $\sigma$, and assume $D^\ell_\sigma \leq -1$ (equivalently, $D_{2^\sigma} \leq \frac{1}{2}$).
    Let $\mathcal A$ be a fibre of $B_n$. Since $\mathcal A$ contains at least $n!/2^{n-1}$ maximal chains, we have
    \[
    D(\mathcal A) \geq \frac{1}{2}2^{1/n} > \frac{1}{2}.
    \]
     Then if $S(\mathcal A) \leq 2^{\sigma}$, by definition of $D_{2^\sigma}$ we would have $D(\mathcal A) \leq D_{2^\sigma} \leq \frac{1}{2}$, a contradiction.

    Thus $S(\mathcal A) \geq 2^{\sigma}$, i.e.\ $|\mathcal A|\geq 2^{\sigma n}$.

    For the second statement, it suffices to note that taking $\Lambda=3$ in \cref{prop:dent_ub} gives
    \[
        D^\ell_\sigma = \Dent_\sigma \leq 3\sigma-2.1276.
    \]
    Thus, $D^\ell_\sigma \leq -1$ for any $\sigma \leq \frac{2.1276-1}{3} = 0.37586\ldots$, and the statement follows.
\end{proof}
Note that this approach cannot be used to prove bounds above $2^{0.38531n}$, as we know from Theorem~\ref{thm:closed-bound-all-S} that $\Dent_{0.38531} > -1$. Different ideas would thus be needed to prove the conjecture of Lonc and Rival.

\section{Conclusion}\label{sec:concl}

We gave an essentially tight tradeoff between the normalized size and the normalized chain-density of set systems. This provides a fairly complete answer to the extremal combinatorics question raised by Johnson, Leader, and Russell, and implies an improved space-time tradeoff for TSP (and related permutation problems) in the recently introduced framework of~\cite{ANW, DK}.

The special case of set systems arising as the family of downsets of a partial order (here, the maximal chains correspond to linear extensions of the same partial order) remains open; while the barrier side of our bounds still hold in this setting, our set system constructions are not of this restricted form. In this setting, the construction of Koivisto and Parviainen~\cite{KoivistoParviainen2010} was improved by Ameli, Nederlof, and Wang~\cite{ANW}. Their construction remains the best to date, leaving the gap $3.1818 \leq \gamma \leq 3.7493$ in the partial order case. For \emph{regular bipartite posets}, \cite{ANW} show a lower bound of 3.6, implying a clear separation of this special case from the general set systems attaining an optimal tradeoff.

Solving the TSP in time $c^n$ for $c<2$, or providing some argument for why this would not be possible, remains, as one of the landmark questions of computer science, entirely open. Likewise, the question of whether a polynomial-space algorithm exists with running time $c^n$ for $c<4$ remains unsolved. More modestly, obtaining an algorithm that improves our space-time tradeoff $\cT \cdot \cS \approx 3.1819$  via different techniques remains an interesting research direction.

\appendix

\newpage

\small
\bibliographystyle{alphaurl}
\bibliography{main}
\end{document}

%% file: section_3_2_arx.tex
\subsection{From $(Z,T)$ witnesses to set system witnesses}
We now prove the reverse direction of \cref{thm:comb_equals_entropy}.  The
point is that a witness $(Z,T)$ can be converted into large families of
permutations whose prefix sets form the desired set systems.  We first carry
this out when $Z$ has finite range.  The general case is then obtained by
coarsening $Z$ into finitely many values, losing only an arbitrarily small
amount in the threshold entropy and losing nothing in mutual information.

\begin{proposition}\label{prop:block-profile-approx}
Let $(Z,T)$ be an admissible pair such that $Z$ takes finitely many values and
$d=\II(Z;T)<\infty$.  Let $h=h(Z,T)$.  For every $\eta>0$ there are set systems
$\F_n\subseteq2^{[n]}$ with $C(\F_n)>0$ such that
\begin{align*}
  \mathrm{(i)}~~ \limsup_{n\to\infty}S^{\ell}(\F_n) & \le h+\eta \\
   \mathrm{(ii)}~~\liminf_{n\to\infty} D^{\ell}(\F_n) & \ge  -d.
\end{align*}
\end{proposition}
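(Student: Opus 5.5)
The plan is to realize the conditional CDFs of $T$ given $Z$ as the ``profiles'' of a family of permutations, and to take $\F_n$ to be the prefix sets of these permutations. Write $z_1,\ldots,z_m$ for the values of $Z$, $p_i=\Pbb(Z=z_i)$ (dropping values of probability $0$), and $F_i(t)=\Pbb(T\le t\mid Z=z_i)$. Since $T$ is uniform, $\sum_i p_iF_i(t)=t$ and
\[
\HH(\1_{\{T\le t\}}\mid Z)=\sum_i p_i\hbin(F_i(t))\le h \quad\text{for all } t.
\]
Fix an integer $L$, to be chosen depending on $\eta$, with grid times $t_\ell=\ell/L$. Let $Y_L=\lceil LT\rceil$ be the discretization of $T$. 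By data processing, $\II(Z;Y_L)\le\II(Z;T)=d$.

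\textbf{Construction.} For $n$ divisible by $L$, partition $[n]$ into blocks $P_1,\ldots,P_m$ with $|P_i|=n_i\approx p_in$. Choose integer counts $c_{i\ell}\ge0$, namely roundings of $n_ip_{\ell\mid i}$ where $p_{\ell\mid i}=F_i(t_\ell)-F_i(t_{\ell-1})$. The counts should satisfy $\sum_i c_{i\ell}=n/L$ for every $\ell$ and $\sum_\ell c_{i\ell}=n_i$ for every $i$. Such a consistent rounding of the real matrix $(n_ip_{\ell\mid i})$, whose row and column sums are integers or nearly so, exists by a standard matrix/flow rounding with entrywise error at most $1$; this is the first technical point to check. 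Let $\Pi_n$ be the set of permutations whose $\ell$-th window of $n/L$ consecutive positions contains exactly $c_{i\ell}$ elements of $P_i$ for each $i$. Set $\F_n$ to be the set of all prefix sets of permutations in $\Pi_n$. Then every permutation in $\Pi_n$ is a maximal chain of $\F_n$, so $C(\F_n)\ge|\Pi_n|>0$.

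\textbf{Density.} Counting directly,
\[
|\Pi_n|=\prod_\ell\binom{n/L}{c_{1\ell},\ldots,c_{m\ell}}\cdot\prod_i\binom{n_i}{c_{i1},\ldots,c_{iL}}\cdot\prod_{i,\ell}c_{i\ell}!.
\]
Dividing by $n!$ and applying Stirling with $m,L$ fixed gives
\[
\frac1n\log_2\frac{|\Pi_n|}{n!}\;\to\;\HH(Z\mid Y_L)-\HH(Z)=-\II(Z;Y_L)\;\ge\;-d .
\]
To see the form of the limit, note that the product equals $\prod_\ell (n/L)!\,\prod_i n_i!\,/\prod_{i,\ell}c_{i\ell}!$. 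This gives (ii) for every fixed $L$.

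\textbf{Size.} A prefix set $A$ of size $k$ with $t_{\ell-1}\le k/n\le t_\ell$ satisfies $|A\cap P_i|=n_iF_i(k/n)+O(n/L)+O(L)$. This is because within one window the block counts can drift by at most $n/L$ from linear interpolation, and the path $F_i$ is monotone. So $|\F_n|$ is at most
\[
\sum_k\ \sum_{(a_i)}\ \prod_i\binom{n_i}{a_i},
\]
where the inner sum runs over vectors with $|a_i-n_iF_i(k/n)|\le Cn/L$. There are polynomially many such $(k,(a_i))$ terms for fixed $m$; the factor is $(n+1)^{m+1}$. Each term is at most $2^{\sum_i n_i\hbin(a_i/n_i)}$. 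By uniform continuity of $\hbin$ on $[0,1]$, this is at most $2^{n(\sum_ip_i\hbin(F_i(k/n))+\omega(C/L))}\le 2^{n(h+\omega(C/L)+o(1))}$, where $\omega$ is the modulus of continuity of $\hbin$. Choosing $L$ so large that $\omega(C/L)\le\eta$ gives (i). Note that no differentiability of $F_i$ is needed, only monotonicity.

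The main obstacle I expect is the bookkeeping in the size bound: controlling the deviation of prefix-set block counts from the path uniformly over all $k$, together with the rounding errors in $n_i$ and $c_{i\ell}$. Handling this through the crude $O(n/L)$ window tolerance plus uniform continuity of $\hbin$ seems cleanest, since $L$ is fixed before $n\to\infty$. Then $\eta$ absorbs the $L$-dependent loss in the threshold entropy, while the mutual information side loses nothing because $\II(Z;Y_L)\le d$.
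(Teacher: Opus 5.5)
Your proposal is correct and is essentially the paper's proof. Both use the same permutation class $\Pi_n$ (prescribed counts of each class in each consecutive window of positions) and the same prefix family. Both use the same multinomial/Stirling computation, which gives $\II(Z;A)\le d$ by data processing, and the same profile-counting bound via $\binom{N}{K}\le 2^{N\hbin(K/N)}$.

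The differences are minor. The paper sidesteps your matrix-rounding step and the restriction $L\mid n$ by rounding the joint table $nq_{ia}$ directly and taking the window sizes to be its column sums. It also controls prefix profiles by choosing a partition fine enough, using uniform continuity of the $F_i$. You instead use the automatic bound $c_{i\ell}\le n/L$, which amounts to $p_i(F_i(t)-F_i(s))\le t-s$, together with uniform continuity of $\hbin$.
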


\begin{proof}
Write the values of $Z$ as $1,\ldots,s$, and set
\[ 
  p_i=\Pbb(Z=i)>0,
  \qquad
  F_i(t)=\Pbb(T\le t\mid Z=i).
\]
Since $T$ is uniform on $[0,1]$, the conditional distribution functions obey
\begin{equation}\label{eq:mixture-uniform-main}
  \sum_{i=1}^s p_iF_i(t)= \Pbb(T\leq t) = t,
  \qquad 0\le t\le1.
\end{equation}
In particular, each $F_i$ is continuous: if $F_i$ had a discontinuity at some point,
then so would $t\to \Pbb(T\leq t)$ (because all $F_i$ are non-negative), contradicting the uniformity of
$T$. Because each $F_i$ is also trivially bounded and monotone, it is uniformly continuous.

For later use, note that
\begin{equation}\label{eq:d-h-finite-main}
  h = h(Z,T)
  =\sup_{0\le t\le1}\sum_{i=1}^s p_i\hbin(F_i(t)),
\end{equation}
because, conditional on $Z=i$, the bit $\1_{\{T\le t\}}$ is Bernoulli with
parameter $F_i(t)$.

Choose a partition
\[
  0=t_0<t_1<\cdots<t_L=1
\]
fine enough so that the following holds.  Whenever $t_{a-1}\le t\le t_a$, the values
$F_i(t)$ and $F_i(t_{a-1})$ are close enough, for every $i$, that any vector
$x=(x_1,\ldots,x_s)$ satisfying
\[
  F_i(t_{a-1})\le x_i\le F_i(t_a)
  \qquad\text{for all }i
\]
obeys
\begin{equation}\label{eq:rectangle-entropy-control}
  \sum_{i=1}^s p_i\hbin(x_i)
  \le h+\frac{\eta}{3}.
\end{equation}
Such a partition exists from the uniform continuity of the finitely many functions
$F_i$ and of the binary entropy function: after refining the partition, each
$x_i$ can be made to be arbitrarily close to $F_i(t_{a-1})$, and the value at the left endpoint is at
most $h$ by \eqref{eq:d-h-finite-main}.

Let $A$ be the interval of the partition containing $T$, so $A=a$ when
$t_{a-1}<T\le t_a$.  Define
\[
  \Delta_a=t_a-t_{a-1},
  \qquad
  q_{ia}=p_i\bigl(F_i(t_a)-F_i(t_{a-1})\bigr).
\]
Thus $q_{ia}$ is the joint probability of the event $\{Z=i,A=a\}$.  Its row
and column sums are
\[
  \sum_a q_{ia}=p_i,
  \qquad
  \sum_i q_{ia}=\Delta_a,
\]
where the second identity is \eqref{eq:mixture-uniform-main} applied at
$t_a$ and $t_{a-1}$.

We now turn this finite table into a set system.  For each $n$, choose integers
$n_{ia}\ge0$ with
\[
  \frac{n_{ia}}{n}\to q_{ia}
  \qquad\text{for every }i,a,
\]
and with total sum $\sum_{i,a}n_{ia}=n$.  Let
\[
  n_i=\sum_a n_{ia},
  \qquad
  m_a=\sum_i n_{ia}.
\]
Then $n_i/n\to p_i$ and $m_a/n\to\Delta_a$.  Partition the ground set into
classes
\[
  [n]=B_1\sqcup\cdots\sqcup B_s,
  \qquad |B_i|=n_i,
\]
and partition the positions of a permutation into consecutive blocks
\[
  R_a=\{m_1+\cdots+m_{a-1}+1, m_1+\cdots+m_{a-1}+2, \ldots,m_1+\cdots+m_a\},
  \qquad |R_a|=m_a.
\]
Let $\Pi_n$ be the set of permutations $\pi$ of $[n]$ such that, for every
$i$ and $a$,
\begin{equation*}
  \#\{j\in R_a:\pi(j)\in B_i\}=n_{ia}.
\end{equation*}
For $\pi\in\Pi_n$ and $0\le k\le n$, let
\[
  P_k(\pi)=\{\pi(1),\ldots,\pi(k)\},
  \qquad P_0(\pi)=\varnothing.
\]
Define
\begin{equation*}
  \F_n=\{P_k(\pi):\pi\in\Pi_n,\ 0\le k\le n\}.
\end{equation*}
Every permutation in $\Pi_n$ gives a maximal chain contained in $\F_n$.  Hence
\begin{equation*}
  C(\F_n)\ge |\Pi_n|.
\end{equation*}

We first estimate the chain density.  A permutation in $\Pi_n$ can be specified as follows. For each $a$ and each $i$, choose which $n_{ia}$ positions $j\in R_a$ have $\pi(j)\in B_i$. Then, for each $i$, assign the elements of $B_i$ to
all positions of that type.  This gives
\begin{equation*}
  |\Pi_n|=\frac{\prod_i n_i!\prod_a m_a!}{\prod_{i,a}n_{ia}!}.
\end{equation*}
By Stirling's formula, with the convention $0\log 0=0$,
\begin{align*}
  \lim_{n\to\infty}\frac1n\log_2\frac{n!}{|\Pi_n|}
  &=\sum_{i,a}q_{ia}\log_2\frac{q_{ia}}{p_i\Delta_a} \\
  &=\II(Z;A).
\end{align*}

Since $A$ is a function of $T$, data processing gives
$\II(Z;A)\le\II(Z;T)=d$.  Together with $C(\F_n)\ge |\Pi_n|$, this implies
\[
  \liminf_{n\to\infty}D^{\ell}(\F_n)
  =\liminf_{n\to\infty}\frac1n\log_2\frac{C(\F_n)}{n!}
  \ge -d,
\]
which proves $\mathrm{(ii)}$.

It remains to estimate the size of $\F_n$.  Let $X\in\F_n$ be a prefix whose
size lies in the position block $R_a$.  For each type $i$, let
\[
  x_i(X)=\frac{|X\cap B_i|}{n_i}.
\]
All elements placed in earlier position blocks have already appeared in the
prefix, and no element placed in later position blocks has appeared yet.  Hence
\begin{equation*}
  \frac{\sum_{b<a}n_{ib}}{n_i}
  \le x_i(X)
  \le
  \frac{\sum_{b\le a}n_{ib}}{n_i}.
\end{equation*}
The two endpoints converge to $F_i(t_{a-1})$ and $F_i(t_a)$, respectively.  By
\eqref{eq:rectangle-entropy-control} and the convergence $n_i/n\to p_i$, it
follows that, for all sufficiently large $n$, the following holds
\begin{equation}\label{eq:profile-entropy-control-n}
  \sum_{i=1}^s\frac{n_i}{n}\hbin(x_i(X))
  \le h+\frac{2\eta}{3}.
\end{equation}

For a fixed profile $(k_1,\ldots,k_s)$, where $k_i=|X\cap B_i|$, the number of
sets with that profile is at most
\[
  \prod_i\binom{n_i}{k_i}
  \le
  2^{\sum_i n_i\hbin(k_i/n_i)},
\]
using the standard binomial estimate
\begin{equation*}
  \binom{N}{K}\le 2^{N\hbin(K/N)}.
\end{equation*}
Thus \eqref{eq:profile-entropy-control-n} shows that each profile $(k_1,\ldots,k_s)$ contributes
at most $2^{n(h+2\eta/3)}$ sets.  There are at most $\prod_i(n_i+1)$ possible
profiles, which is polynomial in $n$.  Therefore
\[
  \frac1n\log_2|\F_n|
  \le h+\frac{2\eta}{3}+o(1).
\]
Taking the limsup and using the remaining slack in $\eta$ proves
$\mathrm{(i)}$.
\end{proof}

We next remove the assumption that $Z$ has finite range.

\begin{lemma}\label{lem:finite-valued-reduction}
Let $(Z,T)$ be admissible with $\II(Z;T)<\infty$.  For every $\rho>0$ there is
a finite-valued random variable $Z_\rho$, which is a function of $Z$, such that

\begin{align*}
  \mathrm{(i)} ~~ \II(Z_\rho;T) & \le\II(Z;T), \\
   \mathrm{(ii)}  ~~h(Z_\rho,T) & \le h(Z,T)+\rho.
\end{align*}
\end{lemma}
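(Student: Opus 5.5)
Part (i) comes for free: any $Z_\rho=g(Z)$ satisfies $\II(Z_\rho;T)\le\II(Z;T)$ by the data processing inequality. All the work is in (ii), and here the coarsening must be chosen carefully. By the preliminaries, conditioning on a function of $Z$ can only \emph{increase} $\HH(\1_{\{T\le t\}}\mid\cdot)$, so we need a partition of the range of $Z$ that preserves the threshold information up to $\rho$, uniformly in $t$.

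The first step is to encode each value of $Z$ by its posterior CDF. Let $G_Z(t)=\Pbb(T\le t\mid Z)$, chosen as a regular conditional distribution. Then
\[
\HH(\1_{\{T\le t\}}\mid Z)=\Ebb[\hbin(G_Z(t))].
\]
Since $\Ebb[G_Z(t)]=t$ and $\hbin$ is uniformly continuous, it suffices to handle finitely many thresholds. Fix a grid $0=t_0<\dots<t_L=1$ with mesh $\delta$. For $t\in[t_{a-1},t_a]$ we have $G_Z(t_{a-1})\le G_Z(t)\le G_Z(t_a)$. The expected gap $\Ebb[G_Z(t_a)-G_Z(t_{a-1})]$ equals $\delta$, which is small, but the gap itself need not be small pointwise. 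Posterior CDFs can be very steep, and this is the main obstacle.

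To handle it, I would use the binning trick from Lemma~\ref{lem:finite-dual-transfer}. Let $A$ be the index of the grid cell containing $T$. For any $W$ and any $t$ in cell $a$, the bit $\1_{\{T\le t\}}$ differs from $\1_{\{A\le a-1\}}$ only when $A=a$, an event of probability $\delta$. Hence
\[
\HH(\1_{\{T\le t\}}\mid W)\le\HH(\1_{\{A\le a-1\}}\mid W)+\hbin(\delta).
\]
Applied with $W=Z_\rho$, this reduces (ii) to controlling the finitely many quantities $\HH(\1_{\{A\le k\}}\mid Z_\rho)$ for $k<L$. But applying the same reduction to $W=Z$ goes the wrong way, because $\HH(\1_{\{A\le k\}}\mid Z)$ equals $\HH(\1_{\{T\le t_k\}}\mid Z)$, which is at most $h(Z,T)$. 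So no extra loss occurs on that side, and it suffices to find a finite coarsening $Z_\rho$ with
\[
\HH(\1_{\{A\le k\}}\mid Z_\rho)\le\HH(\1_{\{A\le k\}}\mid Z)+\rho/2\quad\text{for all }k,
\]
where $\hbin(\delta)\le\rho/2$.

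Achieving this is a standard finite-dimensional quantization. Let $Q(Z)=(\Pbb(A=a\mid Z))_a\in\triangle_L$. Partition the simplex into finitely many cells of diameter at most $\epsilon$, and let $Z_\rho$ be the cell containing $Q(Z)$, which is a function of $Z$. Conditional on $Z_\rho$, the posterior of $A$ is the average of $Q(Z)$ over the cell, so it lies within $\epsilon$ of every point of the cell (by convexity, if the cells are convex). Consequently each $C_k$ of the posterior is within $L\epsilon$ of $C_k(Q(Z))$. Since $\hbin$ is uniformly continuous, we get
\[
\Ebb[\hbin(C_k(\text{posterior given }Z_\rho))]\le\Ebb[\hbin(C_k(Q(Z)))]+\rho/2
\]
for $\epsilon$ small. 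Combining this with the previous paragraph and taking the supremum over $t$ yields $h(Z_\rho,T)\le h(Z,T)+\rho$. The hypothesis $\II(Z;T)<\infty$ is needed only so that (i) is meaningful.
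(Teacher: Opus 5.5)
Your proposal is correct and follows essentially the same route as the paper. You first reduce to finitely many grid thresholds via the error-bit bound $\HH(\1_{\{T\le t\}}\mid W)\le\HH(\text{grid bit}\mid W)+\hbin(\text{mesh})$. You then let $Z_\rho$ be the index of a small cell containing the finite vector of posteriors given $Z$, and use that the posterior given $Z_\rho$ is a within-cell average together with uniform continuity of $\hbin$.

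The only cosmetic difference is the choice of coordinates. You quantize the posterior probability vector of the bin index on the simplex, which needs convex cells and costs the $L\epsilon$ factor for partial sums. The paper quantizes the posterior CDF values $\Pbb(T\le t_a\mid Z)$ directly in a cube, which avoids that factor.
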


\begin{proof}
Fix $\varepsilon>0$ and choose a grid
\[
  0=t_0<t_1<\cdots<t_L=1
\]
such that the distance between consecutive points is at most $\varepsilon$.  Let $B_t=\1_{\{T\le t\}}$.  If
$t_{a-1}\le t\le t_a$, then $B_t$ and $B_{t_a}$ differ only when
$t<T\le t_a$, an event of probability at most $\varepsilon$.  Hence, for any
random variable $W$,
\begin{equation}\label{eq:grid-threshold-reduction}
  \HH(B_t\mid W)
  \le \HH(B_{t_a}\mid W)+\hbin(\varepsilon),
\end{equation}
because $B_t$ is determined by $B_{t_a}$ together with the error bit
$B_t\oplus B_{t_a}$.

For the grid points $t_0,\ldots,t_{L}$, fix versions of the conditional
probabilities
\[
  g_a(Z)=\Pbb(T\le t_a\mid Z).
\]
Partition the cube $[0,1]^{L+1}$ into finitely many boxes of side length at
most $\delta$, and let $Z_\rho$ be the index of the box containing
$(g_0(Z),\ldots,g_{L}(Z))$.  Then $Z_\rho$ is finite-valued and is a function
of $Z$, so $(i)$ follows from data processing.

Let
\[
  \omega_{\hbin}(\delta)
  =
  \sup\{ |\hbin(p)-\hbin(q)| : |p-q|\le \delta,\ p,q\in[0,1] \}
\]
be a modulus of continuity for the binary entropy function.  Fix a grid point
$t_a$, and write
\[
  g_a(Z)=\Pbb(T\le t_a\mid Z).
\]
Since $Z_\rho$ is obtained from $Z$, conditioning first on $Z$ and then
averaging over the cell of $Z_\rho$ gives
\[
  \Pbb(T\le t_a\mid Z_\rho)
  =
  \Ebb(g_a(Z)\mid Z_\rho).
\]
Now consider one cell of the partition.  By construction, the $a$-th coordinate
$g_a$ varies by at most $\delta$ inside that cell.  Therefore the average
$\Ebb(g_a(Z)\mid Z_\rho)$, which is an average of values of $g_a$ from the same
cell, is also within $\delta$ of each value $g_a(Z)$ in that cell.  Thus,
almost surely,
\[
  \left|
    \Pbb(T\le t_a\mid Z_\rho)-g_a(Z)
  \right|
  \le \delta.
\]
Applying the modulus of continuity of $\hbin$, we get the point-wise bound
\[
  \hbin\!\left(\Pbb(T\le t_a\mid Z_\rho)\right)
  \le
  \hbin(g_a(Z))+\omega_{\hbin}(\delta).
\]
Taking expectations gives
\[
  \HH(B_{t_a}\mid Z_\rho)
  \le
  \HH(B_{t_a}\mid Z)+\omega_{\hbin}(\delta).
\]
Finally,
\[
  \HH(B_{t_a}\mid Z)\le h(Z,T),
\]
because $h(Z,T)$ is the supremum of $\HH(B_t\mid Z)$ over all thresholds $t$.
Hence, for every grid point $t_a$,
\[
  \HH(B_{t_a}\mid Z_\rho)
  \le
  h(Z,T)+\omega_{\hbin}(\delta).
\]

 Now apply \eqref{eq:grid-threshold-reduction} with
$W=Z_\rho$ to get
\begin{align*}
    \HH(B_t\mid Z_\rho)
    &\leq \HH(B_{t_a}\mid Z_\rho)+\hbin(\varepsilon)\\
    &\leq h(Z,T)+\omega_{\hbin}(\delta)+\hbin(\varepsilon).
\end{align*}

Taking the supremum over all $t$:
\[
  h(Z_\rho,T)
  \le h(Z,T)+\omega_{\hbin}(\delta)+\hbin(\varepsilon).
\]
Choosing $\delta$ and $\varepsilon$ so that the last two terms sum to at most
$\rho$ proves $\mathrm{(ii)}$.
\end{proof}

\begin{proof}[Proof of \cref{thm:comb_equals_entropy}]
We first prove $D^{\ell}_{\sigma}\le\Dent_\sigma$.  By
\cref{lem:set-to-entropy-witness}, every family $\F$ with
$S^{\ell}(\F)\le\sigma$ gives an admissible pair $(Z,T)$ with
$h(Z,T)\le\sigma$ and
\[
  -\II(Z;T)=D^{\ell}(\F).
\]
Taking the supremum over all such families gives
$D^{\ell}_{\sigma}\le\Dent_\sigma$.

For the reverse inequality, fix $0<\tau<\sigma$ and an admissible pair
$(Z,T)$ with $h(Z,T)\le\tau$ and $\II(Z;T)<\infty$.  Apply
\cref{lem:finite-valued-reduction} with a small parameter $\rho>0$, and then
apply \cref{prop:block-profile-approx} to the finite-valued pair
$(Z_\rho,T)$.  If $\rho$ and the parameter $\eta$ in the proposition are chosen
so that $\tau+\rho+\eta\le\sigma$, we obtain set systems $\F_n$ with
\[
  \limsup_n S^{\ell}(\F_n)\le\sigma
  \qquad\text{and}\qquad
  \liminf_n D^{\ell}(\F_n)
  \ge -\II(Z_\rho;T)
  \ge -\II(Z;T).
\]
Thus $D^{\ell}_{\sigma}\ge -\II(Z;T)$.  Taking the supremum over all witnesses
with $h(Z,T)\le\tau$ gives
\[
  D^{\ell}_{\sigma}\ge \Dent_\tau.
\]
Finally let $\tau$ approach $\sigma$ from below and use the continuity of $\Dent$ from
\cref{lem:Dent-concave}. This proves
$D^{\ell}_{\sigma}\ge\Dent_\sigma$, and hence
$D^{\ell}_{\sigma}=\Dent_\sigma$.

The formula involving $S^2/D_S$ is then just a change of variables.  Since
$S=2^\sigma$ and $D^{\ell}_{\sigma}=\log_2 D_{2^\sigma}$,
\[
  \log_2\frac{S^2}{D_S}=2\sigma-D^{\ell}_{\sigma}=2\sigma-\Dent_\sigma.
\]
The remaining equalities in the statement follow directly from the definition
of $\Dent_\sigma$: for a fixed admissible pair $(Z,T)$, the smallest feasible
choice of $\sigma$ is $h(Z,T)$, and the expression $2\sigma+\II(Z;T)$ only
increases when $\sigma$ is made larger.
\end{proof}